\documentclass[10pt,a4paper]{article}
\usepackage[utf8]{inputenc}
\usepackage[T1]{fontenc}
\usepackage{lmodern}
\usepackage[a4paper,margin=2.5cm]{geometry}
\usepackage{amsmath,amssymb,amsfonts,amsthm,bm}
\usepackage{graphicx}
\usepackage{textcomp}
\usepackage{xcolor}
\usepackage{float}
\usepackage{caption}
\usepackage{booktabs}
\usepackage[colorlinks=true,linkcolor=blue,citecolor=blue,urlcolor=blue]{hyperref}
\hypersetup{bookmarksdepth=3}  % omit \paragraph from PDF bookmarks
\usepackage{xurl}
\usepackage{authblk}
\usepackage{tikz}
\usetikzlibrary{arrows.meta}

\usepackage[backend=biber, style=ieee, isbn=false, sortcites,
  maxbibnames=6, minbibnames=1, safeinputenc]{biblatex}
\DeclareSourcemap{
  \maps[datatype=bibtex]{\map{\step[fieldsource=abstract,null]\step[fieldset=abstract,null]}}
}
\newtheorem{theorem}{Theorem}
\newtheorem{corollary}[theorem]{Corollary}
\newtheorem{proposition}{Proposition}

\newtheorem{definition}{Definition}

\theoremstyle{remark}
\newtheorem{remark}{Remark}

\title{\textbf{Variational and Majorization Principles in Lattice Reduction}}
\author[1]{Javier Blanco-Romero\thanks{Contact author: \texttt{frblanco@pa.uc3m.es}}}
\author[1]{Florina Almenares Mendoza}
\affil[1]{Department of Telematic Engineering, Universidad Carlos III de Madrid, Leganés, Madrid, 28911, Spain}

\date{}

\begin{document}
\maketitle

% ===================================================================
%  ABSTRACT
% ===================================================================
\begin{abstract}
Lattice reduction smooths the Gram-Schmidt profile, and we use majorization to describe the local swap mechanism behind that smoothing. In this language, each non-degenerate Lov\'asz swap acts as a T-transform on the log-norm profile. As a consequence, every strictly Schur-convex measure of profile spread decreases at such a swap. Two structural consequences follow. First, the worst-case GSA envelope admits a variational interpretation. It is the unique minimum-variance profile compatible with the Lov\'asz gap geometry, so its slope is determined by the LLL parameter alone. Second, the realized swap trajectory satisfies an exact telescoping identity for variance dissipation. The same viewpoint also helps organize deep-insertion heuristics. It suggests a thermal family of Schur-convex scoring rules, motivates adaptive selection within that family, and leads to two concrete selectors: Thermal-Adaptive, which reduces operation counts relative to SS-GG on flat profiles in our benchmarks while recovering SS-GG on $q$-ary inputs, and Geodesic Deep-LLL, which reduces equivalent-swap counts on structured lattices in our benchmarks at higher wall-clock cost.
\end{abstract}

\noindent\textbf{Keywords:} Lattice reduction; LLL algorithm; majorization; T-transform; Schur-convexity; Gram-Schmidt orthogonalization; $q$-ary lattices

\noindent\textbf{Mathematics Subject Classification (2020):} 11Y16; 11H06; 68W25

% ===================================================================
\section{Introduction}
\label{sec:intro}
% ===================================================================

Lattice reduction smooths a basis. The Gram-Schmidt log-norm profile of a raw lattice basis is usually jagged, and after LLL~\cite{lenstra1982factoring}, BKZ~\cite{schnorr1994lattice}, or BKZ~2.0~\cite{chen2011bkz} it is often close to linear. Two questions sit behind that familiar picture: why does the profile flatten, and how much can the next move steer that flattening? The standard answer to the first is usually phrased through the Geometric Series Assumption~\cite{schnorr2003lattice} and explored quantitatively with simulators~\cite{chen2011bkz}. Together they describe the profile shape reduction tends to produce, but not the local swap dynamics that drive the flattening. The second question has led to a family of deep-insertion selectors~\cite{fontein2014potlll,yasuda2019new,bhattacherjee2025greedy}. Each chooses an objective and descends in it, with the choice justified mainly by algorithmic behavior.

We go one level lower, to the individual swap. A non-degenerate Lov\'asz swap preserves the sum of two adjacent log-norms and pulls them strictly closer together. In the language of majorization theory~\cite{marshall1979inequalities}, it is a T-transform, so the post-swap profile is majorized by the pre-swap one. Every strictly Schur-convex function of the profile therefore decreases at every non-degenerate swap, independently of the dimension, the input distribution, and the LLL threshold~$\delta$. That local law has two consequences. On the dynamics side, the log-determinant pins the profile to an affine hyperplane and the Lov\'asz condition bounds adjacent log-norm gaps; under both constraints, the arithmetic progression whose slope depends only on $\delta$ is the unique minimum-variance shape. Its slope is the steepest descent any $\delta$-LLL output can sustain in the worst case; the Gram-Schmidt log slopes that Gama and Nguyen~\cite{gama2008predicting} measure on real bases are shallower, in line with the variance-dissipation picture below. On the algorithmic side, the same structure organizes deep-insertion selectors into two families, a symmetric one that treats profile entries equally and contains SS-GG, and a position-weighted one that contains Pot-DeepLLL. Within the thermal subfamily, adaptive selection becomes natural as the profile drifts. Section~\ref{sec:related} surveys prior work, Section~\ref{sec:prelim} sets the background, Section~\ref{sec:htheorem} proves the per-swap structure, Section~\ref{sec:algorithms} develops the selector implications, Section~\ref{sec:experiments} tests them, and Sections~\ref{sec:discussion}--\ref{sec:conclusions} close the loop.

% ===================================================================
\section{Related Work}
\label{sec:related}
% ===================================================================

Two strands of prior work bear on the per-swap picture: the evolution of the GSO profile during reduction, and the algorithmic exploitation of that evolution through objective-driven selectors.

On the dynamics side, Schnorr's Geometric Series Assumption~\cite{schnorr2003lattice} approximates BKZ output by a geometric sequence $\lVert\mathbf{b}_i^*\rVert^2/\lVert\mathbf{b}_1^*\rVert^2 = q^{i-1}$, with quotient $q$ depending on blocksize. Gama and Nguyen~\cite{gama2008predicting} measured Gram-Schmidt log slopes across LLL, DeepLLL, and BKZ and linked the slope to the root-Hermite factor. Chen and Nguyen~\cite{chen2011bkz} built the GSA into the BKZ~2.0 simulator; Hanrot, Pujol, and Stehl\'e~\cite{hanrot2011analyzing} analyzed BKZ as a dynamical system on profile space and showed convergence to a quasi-GSA fixed point; Micciancio and Walter~\cite{micciancio2016practical} pushed the predictive side further with closed-form quality estimates. At finer scale, Yu and Ducas~\cite{yu2017second} observed negative correlation between adjacent squared norms $r_i$ and $r_{i+1}$ and a self-stabilizing reduction of global profile variance; Bai, Stehl\'e, and Wen~\cite{bai2018measuring} documented the head concavity that makes BKZ output deviate from a straight line at small blocksize, a deviation that shrinks as the blocksize grows. These results characterize the shape reduction converges to and the rate of convergence at the tour level, but do not certify monotonicity of nonlinear profile functionals at the level of a single swap.

On the algorithmic side, nonsequential local reduction begins with DeepLLL~\cite{schnorr1994lattice}, where a basis vector is inserted at an earlier position whenever a generalized Lov\'asz condition is violated. The polynomial-time variants Pot-DeepLLL and SS-DeepLLL replaced that rule by descent in an explicit objective: the basis potential $\mathrm{Pot}(B) = \prod_i \lVert\mathbf{b}_i^*\rVert^{2(d-i+1)}$~\cite{fontein2014potlll} and the squared-sum $\mathrm{SS}(B) = \sum_i \lVert\mathbf{b}_i^*\rVert^2$~\cite{yasuda2019new}. Efficient GSO update formulas~\cite{yamaguchi2017explicit} made these schemes practical, and the SS objective already appeared as a preprocessing criterion in SVP sampling~\cite{fukase2015accelerated}. Yasuda et al.~\cite{yasuda2017analysis} gave conditional decrease guarantees for SS under individual LLL swaps. The X-GG framework of Bhattacherjee, Hernandez-Castro, and Moyler~\cite{bhattacherjee2025greedy} globalizes the descent idea, searching the whole basis for the insertion that most decreases an objective $X$ and obtaining the SS-GG and Pot-GG instances. Across this line, each objective is motivated by its algorithmic behavior, and there is no common principle that selects or parameterizes them from the geometry of the swap itself.

Majorization is standard in matrix analysis, quantum information, and resource theory~\cite{marshall1979inequalities}. We are not aware of prior work using it to organize the per-swap dynamics of lattice reduction or to derive deep-insertion objective families from the structure of the underlying move.

% ===================================================================
\section{Background}
\label{sec:prelim}
% ===================================================================

Three ingredients carry the rest of the paper: the GSO log-norm profile, the Lov\'asz swap, and the majorization language that describes what a swap does.

\subsection{The GSO log-norm profile}

\begin{definition}[GSO log-norm profile]
Given a lattice basis $\mathbf{B} = (\mathbf{b}_1, \ldots, \mathbf{b}_d)$, its GSO log-norm profile is $\mathbf{p} = (p_1, \ldots, p_d)$ with $p_i = \ln\lVert\mathbf{b}_i^*\rVert$ and $r_i = \lVert\mathbf{b}_i^*\rVert^2 = e^{2p_i}$, where $\mathbf{b}_i^*$ are the Gram-Schmidt vectors and $\ln$ denotes the natural logarithm (used throughout).
\end{definition}

Every basis change in lattice reduction multiplies $\mathbf{B}$ on the left by a unimodular matrix, so $|\det \mathbf{B}| = \prod_i r_i^{1/2}$ is preserved and so is its logarithm:
\begin{equation}\label{eq:det}
  \textstyle\sum_{i=1}^d p_i = \ln\lvert\det(\mathbf{B})\rvert.
\end{equation}
The \emph{LLL potential} $\Phi = \sum_{i=1}^d (d-i+1)\,p_i$ is a position-weighted sum, with earlier positions contributing more heavily. It serves as a termination certificate: $\Phi$ decreases strictly at each Lov\'asz swap and is bounded below by a lattice-dependent constant, so the algorithm performs only finitely many swaps. Unlike the log-determinant, $\Phi$ is strictly decreasing rather than conserved.

\subsection{Lov\'asz swaps}

A swap at position~$k$ is triggered when the \emph{Lov\'asz condition} is violated:
\begin{equation}\label{eq:lovasz}
  r_k < (\delta - \mu_{k,k-1}^2)\,r_{k-1},
\end{equation}
where $\mu_{k,k-1}$ is the $(k,k-1)$ Gram-Schmidt coefficient and $\delta \in (1/4,\,1]$ is the LLL threshold. The prefactor is always strictly positive, since size reduction enforces $|\mu_{k,k-1}| \leq 1/2$ and hence $\mu_{k,k-1}^2 \leq 1/4 < \delta$. The swap updates the GS norms by
\begin{equation}\label{eq:swap}
  r'_{k-1} = r_k + \mu_{k,k-1}^2\,r_{k-1}, \qquad
  r'_k     = \frac{r_{k-1}\,r_k}{r'_{k-1}},
\end{equation}
preserving $r'_{k-1}\,r'_k = r_{k-1}\,r_k$ and hence the log-sum $p'_{k-1} + p'_k = p_{k-1} + p_k$. Only $p_{k-1}$ and $p_k$ change.

The size-reduction bound also gives a $\mu$-free way to read the Lov\'asz condition. The hardest case is $|\mu_{i,i-1}| = 1/2$, which makes the right-hand side of~\eqref{eq:lovasz} smallest. So every size-reduced LLL-reduced basis satisfies the classical worst-case adjacent bound~\cite{lenstra1982factoring,nguyen2009lll}
\begin{equation}\label{eq:lll_worst_gap}
  r_i \ge \Bigl(\delta - \frac14\Bigr) r_{i-1},
  \qquad i=2,\ldots,d.
\end{equation}
In log coordinates, this says that every terminal adjacent gap is at most
\begin{equation}\label{eq:cdelta}
  c_\delta = \tfrac{1}{2}\ln\!\bigl(\delta - \tfrac{1}{4}\bigr)^{-1} > 0.
\end{equation}
Conversely, during reduction, any gap $p_{i-1}-p_i>c_\delta$ violates Lov\'asz for every size-reduced value of $\mu_{i,i-1}$. This is the worst-case boundary used later in Proposition~\ref{prop:gsa_minvar}.

\subsection{Deep insertions}

A deep insertion moves a basis vector from position $k$ to an earlier position $j < k$, shifting the block $(j,\ldots,k{-}1)$ one step to the right. It is therefore a long-range analogue of an adjacent swap: one move can resolve $k-j$ adjacent inversions at once~\cite{schnorr1994lattice}.

The admissibility test uses projections of $\mathbf{b}_k$ onto progressively smaller orthogonal complements. Let $\pi_\ell$ project onto the complement of $\mathrm{span}(\mathbf{b}_1^*,\ldots,\mathbf{b}_{\ell-1}^*)$, and write
\[
  P_\ell = \lVert\pi_\ell(\mathbf{b}_k)\rVert^2.
\]
At $\ell = k$ the projection is $\mathbf{b}_k^*$ itself, so $P_k = r_k$. Running the recursion backward from $\ell = k$ through $\ell = j$,
\[
  P_\ell = P_{\ell+1} + \mu_{k,\ell}^2\,r_\ell,
  \qquad \ell = k{-}1,\ldots,j,
\]
adds back the contribution of each GS direction. So $P_j$ is the squared norm of $\mathbf{b}_k$ after removing its components along the first $j-1$ GS directions, while $r_j = \lVert\mathbf{b}_j^*\rVert^2$ is the squared norm of the current $j$-th GS vector; the two are different in general.

\begin{definition}[Generalized Lov\'asz condition]
\label{def:gen_lovasz}
A deep insertion of $\mathbf{b}_k$ at position $j < k$ is \emph{admissible} when
\[
  P_j < \delta\,r_j,
\]
i.e., the squared projection norm of $\mathbf{b}_k$ at level $j$ is smaller by factor $\delta$ than the current squared GS norm $r_j$. In practice, one scans candidate positions $j = k{-}1,k{-}2,\ldots,1$ and tests this inequality with the already computed $P_j$. For $j = k-1$ this is equivalent to the ordinary Lov\'asz condition~\eqref{eq:lovasz}, since $P_{k-1} = r_k + \mu_{k,k-1}^2 r_{k-1}$.
\end{definition}

The same recursion gives the post-insertion GSO norms used throughout the algorithmic sections~\cite{yamaguchi2017explicit}. We will measure the cost of one deep insertion both by its insertion count and by its equivalent-swap depth $k-j$.

\subsection{Majorization and T-transforms}

\begin{definition}[T-transform~{\cite{marshall1979inequalities}}]
A \emph{T-transform} at coordinates $(i,j)$ is the operator $T_{ij,\varepsilon}$ acting on $\mathbf{x} \in \mathbb{R}^d$ with $x_i > x_j$ by
\[
  T_{ij,\varepsilon}\mathbf{x}
  = (x_1,\ldots,x_{i-1},\,x_i - \varepsilon,\,x_{i+1},\ldots,x_{j-1},\,x_j + \varepsilon,\,x_{j+1},\ldots,x_d),
  \qquad \varepsilon \in (0,\,x_i - x_j).
\]
It closes the gap between the two selected entries while preserving their sum; all other coordinates are unchanged.
\end{definition}

\begin{definition}[Majorization~{\cite{marshall1979inequalities}}]
$\mathbf{y}$ is \emph{majorized} by $\mathbf{x}$, written $\mathbf{y} \prec \mathbf{x}$, when $\mathbf{y}$ is reachable from $\mathbf{x}$ by finitely many T-transforms.
\end{definition}

\begin{definition}[Schur-convexity]
A function $F\colon\mathbb{R}^d \to \mathbb{R}$ is \emph{Schur-convex} if $\mathbf{y} \prec \mathbf{x}$ implies $F(\mathbf{y}) \leq F(\mathbf{x})$, and \emph{strictly Schur-convex} if the inequality is strict whenever $\mathbf{y} \neq \mathbf{x}$. It is \emph{(strictly) Schur-concave} if $-F$ is (strictly) Schur-convex.
\end{definition}

Every strictly Schur-convex function strictly decreases along any chain of T-transforms, and every strictly Schur-concave function on the positive simplex strictly increases. These two languages, lattice swaps and T-transforms, are about to coincide.

% ===================================================================
\section{Per-Swap Majorization Structure}
\label{sec:htheorem}
% ===================================================================

A Lov\'asz swap preserves the sum of two adjacent log-norms and, when $\mu_{k,k-1} \neq 0$, pushes them strictly closer together (Figure~\ref{fig:ttransform}). The labeled coordinates may cross, but the absolute gap shrinks. That is precisely the description of a T-transform. The variance decrease that follows is then immediate from Schur-convexity.

\begin{theorem}[Per-swap majorization]
\label{thm:variance}
Let $\mathbf{p} = (p_1, \ldots, p_d)$ be any GSO log-norm profile (no sign restriction on the~$p_i$). Every Lov\'asz swap at position~$k$ with $\mu_{k,k-1} \neq 0$ is a T-transform:
\[
  (p_{k-1},\, p_k)
  \;\longmapsto\;
  (p_{k-1} - \varepsilon,\; p_k + \varepsilon),
  \qquad
  \varepsilon \in (0,\, p_{k-1} - p_k),
\]
with no other profile entry affected. In particular $\mathbf{p}' \prec \mathbf{p}$: the post-swap profile is majorized by the pre-swap one.
\end{theorem}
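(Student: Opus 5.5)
The plan is to verify directly from the swap formulas~\eqref{eq:swap} that the two affected log-norms move inside the interval spanned by the old pair, and then to read off the T-transform parameters. Since only positions $k-1$ and $k$ change and the product $r'_{k-1}r'_k = r_{k-1}r_k$ is preserved, we have $p'_{k-1}+p'_k = p_{k-1}+p_k$. So it suffices to set
\[
\varepsilon := p_{k-1} - p'_{k-1} = -\tfrac12 \ln\bigl(r'_{k-1}/r_{k-1}\bigr).
\]
Then $p'_k = p_k + \varepsilon$ follows automatically from conservation of the sum, and it remains to show $0 < \varepsilon < p_{k-1} - p_k$. In particular this will also certify $p_{k-1} > p_k$, which is needed for the T-transform to be defined at the pair $(k-1,k)$.

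Both inequalities reduce to the single chain $r_k < r'_{k-1} < r_{k-1}$, which I establish as follows.
\begin{itemize}
\item \emph{Lower bound.} From~\eqref{eq:swap}, $r'_{k-1} = r_k + \mu_{k,k-1}^2 r_{k-1}$, and this is strictly greater than $r_k$ because $\mu_{k,k-1}\neq 0$ and $r_{k-1}>0$.
\item \emph{Upper bound.} The swap is triggered by the violated Lov\'asz condition~\eqref{eq:lovasz}. Rearranged, it says exactly $r_k + \mu_{k,k-1}^2 r_{k-1} < \delta\, r_{k-1}$, that is, $r'_{k-1} < \delta r_{k-1} \le r_{k-1}$, using $\delta\le 1$.
\end{itemize}
Taking $\tfrac12\ln$ of the chain gives $p_k < p'_{k-1} < p_{k-1}$. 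This yields $p_{k-1}>p_k$, then $\varepsilon = p_{k-1}-p'_{k-1} > 0$, and finally $\varepsilon < p_{k-1} - p_k$.

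With $\varepsilon$ in the open interval $(0,\,p_{k-1}-p_k)$, the map $\mathbf{p}\mapsto\mathbf{p}'$ is precisely $T_{(k-1)k,\varepsilon}$. Hence $\mathbf{p}'\prec\mathbf{p}$ by the definition of majorization as reachability by finitely many (here, one) T-transforms. I will remark that $\varepsilon > \tfrac12(p_{k-1}-p_k)$ is permitted, so the labelled entries may cross; this is consistent with the definition, since only the open interval matters.

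The only delicate point is strictness at the two ends of the interval for $\varepsilon$.
\begin{itemize}
\item If $\mu_{k,k-1}=0$, then $r'_{k-1}=r_k$, so $\varepsilon$ equals the full gap and the swap is a mere transposition of the two entries. That is not a T-transform, and it leaves every symmetric function unchanged. This is why the hypothesis $\mu_{k,k-1}\neq0$ is needed.
\item The other end, $\varepsilon > 0$, relies on the strict inequality in~\eqref{eq:lovasz} combined with $\delta\le1$. No sign restriction on the $p_i$ enters anywhere, since the argument only uses monotonicity of $\ln$ on positive reals.
\end{itemize}
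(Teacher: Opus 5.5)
Your proof is correct and follows the paper's argument essentially step for step. Like the paper, you set $\varepsilon = p_{k-1} - p'_{k-1}$, get $p'_k$ from conservation of the local sum, and obtain $0<\varepsilon<p_{k-1}-p_k$ from the chain $r_k < r'_{k-1} < \delta r_{k-1} \le r_{k-1}$, with $\mu_{k,k-1}\neq 0$ supplying the strict lower inequality. The only cosmetic difference is that you read off $p_{k-1}>p_k$ from this chain, whereas the paper takes it directly from the Lov\'asz condition.
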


\begin{proof}
The intuition is that a swap pushes the larger of the two log-norms down and the smaller one up by the same amount $\varepsilon$, keeping their sum fixed. The Lov\'asz condition guarantees $\varepsilon > 0$, and $\mu \neq 0$ guarantees $\varepsilon$ stops short of the gap.

Formally, Eq.~\eqref{eq:swap} shows that the swap changes only $(p_{k-1}, p_k)$ and preserves their sum $s = p_{k-1} + p_k$. Write $a = p_{k-1}$ and $b = p_k$. The Lov\'asz condition~\eqref{eq:lovasz} gives $r_k < r_{k-1}$, hence $a > b$. Define $\varepsilon = a - p'_{k-1}$; then automatically $p'_k = b + \varepsilon$. Again from Eq.~\eqref{eq:swap}$,$
\[
  r'_{k-1} = r_k + \mu^2 r_{k-1} < \delta r_{k-1} \le r_{k-1},
\]
so $p'_{k-1} < a$ and therefore $\varepsilon > 0$. If $\mu \neq 0$, then also $r'_{k-1} > r_k$, so $p'_{k-1} > b$ and hence $\varepsilon < a-b$. No sign condition on the $p_i$ is needed.

The swap preserves $p_{k-1} + p_k$, modifies no other entry, and satisfies $\varepsilon \in (0,\, a - b)$. It is by definition a T-transform at coordinates $(k{-}1,\, k)$, so $\mathbf{p}' \prec \mathbf{p}$.
\end{proof}

\begin{figure}[t]
\centering
\definecolor{tgreen}{HTML}{9fcf69}
\definecolor{tblue}{HTML}{33acdc}
\begin{tikzpicture}[>=Stealth, font=\small, scale=0.9]
  %--- Before swap (blue) ---
  \begin{scope}
    \draw[->] (0,0) -- (4.9,0) node[right] {$i$};
    \draw[->] (0,0) -- (0,3.3) node[above] {$p_i$};
    \fill[gray!20] (0.3,0) rectangle (0.8,1.4); \draw[gray!50] (0.3,0) rectangle (0.8,1.4);
    \fill[gray!20] (3.15,0) rectangle (3.65,2.0); \draw[gray!50] (3.15,0) rectangle (3.65,2.0);
    \fill[gray!20] (4.1,0) rectangle (4.6,0.9); \draw[gray!50] (4.1,0) rectangle (4.6,0.9);
    \fill[tblue!40] (1.25,0) rectangle (1.75,2.8);
    \draw[tblue!80!black, line width=1pt] (1.25,0) rectangle (1.75,2.8);
    \fill[tblue!40] (2.2,0) rectangle (2.7,0.6);
    \draw[tblue!80!black, line width=1pt] (2.2,0) rectangle (2.7,0.6);
    \node[below] at (1.5, 0) {$k{-}1$};
    \node[below] at (2.45, 0) {$k$};
    \node[right] at (1.75, 2.8) {$p_{k-1}$};
    \node[above] at (2.45, 0.6) {$p_k$};
    \draw[dashed, gray!70] (1.1,1.7) -- (2.75,1.7) node[right, gray!60] {$\bar{p}$};
    \node[above] at (2.45, 3.15) {\textit{before}};
  \end{scope}
  \draw[->, very thick] (5.1,1.6) -- (5.9,1.6) node[above, midway] {swap};
  %--- After swap (green) ---
  \begin{scope}[xshift=6.3cm]
    \draw[->] (0,0) -- (4.9,0) node[right] {$i$};
    \draw[->] (0,0) -- (0,3.3) node[above] {$p_i$};
    \fill[gray!20] (0.3,0) rectangle (0.8,1.4); \draw[gray!50] (0.3,0) rectangle (0.8,1.4);
    \fill[gray!20] (3.15,0) rectangle (3.65,2.0); \draw[gray!50] (3.15,0) rectangle (3.65,2.0);
    \fill[gray!20] (4.1,0) rectangle (4.6,0.9); \draw[gray!50] (4.1,0) rectangle (4.6,0.9);
    \fill[tgreen!40] (1.25,0) rectangle (1.75,2.3);
    \draw[tgreen!80!black, line width=1pt] (1.25,0) rectangle (1.75,2.3);
    \fill[tgreen!40] (2.2,0) rectangle (2.7,1.1);
    \draw[tgreen!80!black, line width=1pt] (2.2,0) rectangle (2.7,1.1);
    \node[below] at (1.5, 0) {$k{-}1$};
    \node[below] at (2.45, 0) {$k$};
    \node[right] at (1.75, 2.3) {$p'_{k-1}$};
    \node[above] at (2.45, 1.1) {$p'_k$};
    \draw[dashed, gray!70] (1.1,1.7) -- (2.75,1.7) node[right, gray!60] {$\bar{p}$};
    % k-1: bar decreased, arrow pointing DOWN (left of bar)
    \draw[->, red!70!black, thick] (1.1,2.8) -- (1.1,2.3);
    \node[red!70!black, anchor=east, font=\small] at (1.08,2.55) {$\varepsilon$};
    % k: bar increased, arrow pointing UP (right of bar)
    \draw[->, red!70!black, thick] (2.85,0.55) -- (2.85,1.05);
    \node[red!70!black, anchor=west, font=\small] at (2.87,0.80) {$\varepsilon$};
    \node[above] at (2.45, 3.15) {\textit{after}};
  \end{scope}
\end{tikzpicture}
\caption{A Lov\'asz swap as a T-transform on the GSO log-norm profile (schematic, $d{=}5$, $k{=}3$). The two highlighted entries move by~$\varepsilon$ toward their mean~$\bar{p}$: position~$k{-}1$ moves down, position~$k$ moves up, and all other entries stay fixed. The local sum is preserved and the gap closes strictly, so $\mathbf{p}' \prec \mathbf{p}$.}
\label{fig:ttransform}
\end{figure}
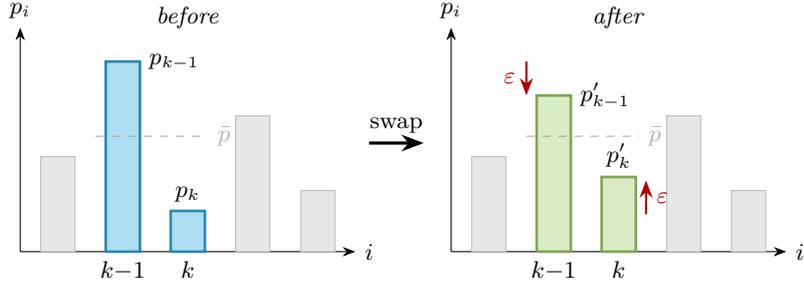

Once a non-degenerate swap is identified as a T-transform, monotonicity of Schur-convex profile functionals follows from the standard majorization theorem, with strict decrease for strictly Schur-convex functionals. When the normalized profile lies in the positive simplex, Schur-concave entropies move in the opposite direction for the same reason. We spell out one example of each. Variance is the natural $\ell^2$ measure of spread, and Shannon entropy ties the picture to information theory.

\begin{corollary}[Sum-of-squares decrease]
\label{cor:variance}
At every Lov\'asz swap with $\mu_{k,k-1} \neq 0$, the sum of squares $\sum_i p_i^2$ strictly decreases. Since the log-determinant $\sum_i p_i$ is conserved, so does the profile variance.
\end{corollary}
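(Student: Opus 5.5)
We derive the corollary directly from Theorem~\ref{thm:variance}. We could simply cite the fact that $\sum_i p_i^2$ is strictly Schur-convex, but an explicit computation is equally short and does not depend on the general majorization theorem. By Theorem~\ref{thm:variance}, a Lov\'asz swap at position $k$ with $\mu_{k,k-1}\neq 0$ changes only the pair $(a,b)=(p_{k-1},p_k)$, sending it to $(a-\varepsilon,\,b+\varepsilon)$ with $\varepsilon\in(0,\,a-b)$. All other squared terms therefore cancel in the difference of sums of squares, and we only need to examine the two-coordinate change.

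The key step is the identity
\[
  (a-\varepsilon)^2+(b+\varepsilon)^2-a^2-b^2 \;=\; -2\varepsilon(a-b)+2\varepsilon^2 \;=\; -2\varepsilon\bigl((a-b)-\varepsilon\bigr).
\]
Both factors $\varepsilon$ and $(a-b)-\varepsilon$ are strictly positive because $\varepsilon\in(0,\,a-b)$. The change is therefore strictly negative, so $\sum_i p_i^2$ strictly decreases.

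For the variance, write $\bar p=\frac1d\sum_i p_i$. Then
\[
  \operatorname{Var}(\mathbf p)=\frac1d\sum_i p_i^2-\bar p^{\,2}.
\]
By Eq.~\eqref{eq:det}, $\sum_i p_i=\ln\lvert\det\mathbf B\rvert$ is invariant under the unimodular basis change, so $\bar p$ is unchanged by the swap. The variance change is therefore $\frac1d$ times the sum-of-squares change, which is strictly negative.

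This argument has no real obstacle. The one point to check is that the strictness comes entirely from the open interval for $\varepsilon$ in Theorem~\ref{thm:variance}. The hypothesis $\mu_{k,k-1}\neq 0$ is what excludes $\varepsilon=a-b$, the case where the swap merely exchanges the two entries and the sum of squares stays the same. The Lov\'asz violation is what excludes $\varepsilon=0$. No sign condition on the $p_i$ is needed at any point.
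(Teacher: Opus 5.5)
Your proposal is correct and is essentially the paper's proof. Both reduce to the active pair via Theorem~\ref{thm:variance} and compute the two-coordinate change explicitly: the paper writes $a^2+b^2=\tfrac12\bigl(s^2+(a-b)^2\bigr)$ and compares $(a-b-2\varepsilon)^2<(a-b)^2$, while you expand directly to $-2\varepsilon\bigl((a-b)-\varepsilon\bigr)<0$, which is the same quantity. Your variance step, that $\bar p$ is fixed by Eq.~\eqref{eq:det}, matches the paper's one-line remark.
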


\begin{proof}
The function $\mathbf{x} \mapsto \sum x_i^2$ is strictly Schur-convex, so by Theorem~\ref{thm:variance} the claim follows. Concretely, writing $s = a + b$ (preserved),
\[
  p_{k-1}^2 + p_k^2
    = \tfrac{1}{2}\bigl(s^2 + (a-b)^2\bigr), \qquad
  (p'_{k-1})^2 + (p'_k)^2
    = \tfrac{1}{2}\bigl(s^2 + (a-b-2\varepsilon)^2\bigr).
\]
Since $\varepsilon \in (0,\, a-b)$, we have $(a-b-2\varepsilon)^2 < (a-b)^2$, and $\sum_i p_i^2$ strictly decreases.
\end{proof}

This variance decrease matters because $\sum p_i^2$ is minimized by the flat profile at fixed $\sum p_i$. Under the Lov\'asz gap constraints of Proposition~\ref{prop:gsa_minvar}, the minimizer becomes the GSA arithmetic progression. In that constrained $\ell^2$ sense, each swap pushes the profile toward GSA shape.

\begin{remark}[Entropy side]
\label{rem:entropy}
If the normalized profile $\mathbf{p}/\sum_j p_j$ has positive coordinates, then the same T-transform makes any \emph{strictly} Schur-concave functional of it increase at every Lov\'asz swap with $\mu_{k,k-1} \neq 0$; normalized Shannon entropy and R\'enyi entropies of positive finite order are instances. Strictness matters here. R\'enyi min-entropy $H_\infty$ is Schur-concave but not strictly so, so most swaps leave it unchanged. We do not use this direction in what follows.
\end{remark}

\begin{remark}
\label{rem:schur}
The unconstrained minimum of $\sum p_i^2$ at fixed $\sum p_i$ is the flat profile $p_i = \bar p$, which is LLL-reduced but not the algorithm's target; LLL stops at the first violation-free profile it encounters rather than at the global variance minimum. Proposition~\ref{prop:gsa_minvar} describes the opposite edge. The gap constraint $p_{i-1} - p_i \ge c_\delta$ selects profiles at or above the worst-case Lov\'asz threshold, and on that boundary the GSA profile of slope $-c_\delta$ is the unique minimum-variance shape. So the GSA slope follows variationally from $\delta$ alone, but the trajectory usually crosses that boundary and stops just inside the LLL stopping set. Variance monotonicity explains why the profile flattens toward the boundary, not why it should end on the GSA line. Appendix~\ref{app:universality} shows the experimental counterpart: normalized post-LLL profiles collapse to a highly consistent shape, with pairwise correlation above $0.91$ for all pairs with $d \ge 30$ and above $0.97$ for all pairs with $d \ge 60$, across dimensions up to $d = 140$.
\end{remark}

Two structural consequences follow from the T-transform picture. Variationally, the worst-case Lov\'asz boundary $c_\delta$ from~\eqref{eq:cdelta} pins the constrained minimum-variance profile to the GSA arithmetic progression. Dissipatively, the realized trajectory obeys an exact telescoping identity.

\begin{proposition}[GSA profile as constrained minimum-variance profile]
\label{prop:gsa_minvar}
Let $c_\delta = \tfrac{1}{2}\ln\!\bigl(\delta - \tfrac{1}{4}\bigr)^{-1} > 0$ be the worst-case log-gap from~\eqref{eq:cdelta}.
Among all profiles $\mathbf{p} \in \mathbb{R}^d$ satisfying
\begin{equation}\label{eq:lovasz_gap}
  \sum_{i=1}^d p_i = L \quad\text{and}\quad
  p_{i-1} - p_i \geq c_\delta \quad (i = 2,\ldots,d),
\end{equation}
the unique minimizer of $\sum_{i=1}^d p_i^2$ is the arithmetic progression
\begin{equation}\label{eq:gsa_profile}
  p_i^* = \frac{L}{d} + c_\delta\,\frac{d+1-2i}{2},
  \qquad i = 1,\ldots,d,
\end{equation}
which is the idealized GSA profile with slope $-c_\delta$.
\end{proposition}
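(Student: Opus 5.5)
Since $\sum_i p_i = L$ is fixed, minimizing $\sum_i p_i^2$ is the same as minimizing $\sum_i (p_i - L/d)^2$. So the problem is a strictly convex quadratic minimized over a nonempty closed convex polyhedron. I would argue directly through a reparametrization, without Lagrange multipliers. Write the gaps as $g_i = p_{i-1} - p_i$ for $i=2,\dots,d$ and set $g_i = c_\delta + t_i$ with $t_i \ge 0$. The profile is then determined by the gaps together with the sum constraint. Concretely, $p_i = p_1 - \sum_{m=2}^{i} g_m$, and $p_1$ is fixed by requiring $\sum_i p_i = L$. Hence every feasible profile decomposes uniquely as
\[
  \mathbf{p} = \mathbf{p}^* + \mathbf{q}, \qquad q_i = \bar{s} - s_i, \quad s_i = \sum_{m=2}^{i} t_m, \quad \bar s = \frac1d\sum_{j=1}^d s_j,
\]
where $\mathbf{p}^*$ is the arithmetic progression~\eqref{eq:gsa_profile}. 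A direct check confirms that $\mathbf{p}^*$ sums to $L$, because $\sum_i (d+1-2i) = 0$, and that its gaps all equal $c_\delta$. The vector $\mathbf{q}$ sums to zero and is nonincreasing, with $q_{i-1} - q_i = t_i \ge 0$.

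Expanding gives $\sum_i p_i^2 = \sum_i (p_i^*)^2 + 2\langle \mathbf{p}^*, \mathbf{q}\rangle + \sum_i q_i^2$, so the key step is to show $\langle \mathbf{p}^*, \mathbf{q}\rangle \ge 0$. Since $\sum_i q_i = 0$, the constant part $L/d$ of $\mathbf{p}^*$ contributes nothing. What remains is $\tfrac{c_\delta}{2}\sum_i (d+1-2i) q_i$, and I need $\sum_i (d+1-2i) q_i \ge 0$ for every nonincreasing zero-sum $\mathbf{q}$. This is a Chebyshev-type rearrangement inequality: $\mathbf{q}$ and the weight vector $w_i = d+1-2i$ are both nonincreasing and $w$ has mean zero. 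To make it explicit I would apply Abel summation. Writing $W_i = \sum_{j\le i} w_j = i(d-i) \ge 0$, one gets $\sum_i w_i q_i = \sum_{i=2}^d W_{i-1}(q_{i-1}-q_i) = \sum_{i=2}^d (i-1)(d-i+1)\,t_i \ge 0$, using $W_d = 0$.

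Combining, every feasible profile satisfies $\sum_i p_i^2 \ge \sum_i (p_i^*)^2 + \sum_i q_i^2$, with $c_\delta > 0$ ensuring the cross term is nonnegative. Equality forces $\mathbf{q}=0$, i.e.\ all $t_i = 0$, so $\mathbf{p} = \mathbf{p}^*$. This gives both minimality and uniqueness. Strict convexity gives uniqueness on its own as well.

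The main obstacle is the cross-term sign, and specifically computing the partial sums $W_i$ correctly. The remaining steps are bookkeeping. As a fallback I would use KKT conditions: $2p_i = \lambda + \nu_{i+1} - \nu_i$ with multipliers $\nu_i \ge 0$ on the active gap constraints. With all gaps active, this gives $\nu_i$ proportional to $(i-1)(d-i+1) > 0$, which certifies optimality by convexity.
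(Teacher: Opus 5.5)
Your proof is correct and follows the paper's argument: you write $\mathbf{p} = \mathbf{p}^* + \mathbf{q}$ with $\mathbf{q}$ zero-sum and nonincreasing, expand the sum of squares, and show the cross term $\langle \mathbf{p}^*, \mathbf{q}\rangle$ is nonnegative. The only difference is that the paper cites Chebyshev's sum inequality for the cross term, while you prove that step directly by Abel summation with $W_i = i(d-i)$, which gives the explicit form $\sum_{i=2}^{d}(i-1)(d-i+1)\,t_i \ge 0$.
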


\begin{proof}
Let $\mathbf p^*$ denote the arithmetic progression in~\eqref{eq:gsa_profile}. Direct computation shows $\sum_i p_i^* = L$ and $p_{i-1}^* - p_i^* = c_\delta$ for all $i$, so $\mathbf p^*$ is feasible. We show that any other feasible $\mathbf p$ has strictly larger objective.

Set $\mathbf q = \mathbf p - \mathbf p^*$. Feasibility of $\mathbf p$ is equivalent to
\begin{equation}\label{eq:gsa_minvar_q}
  \sum_{i=1}^d q_i = 0, \qquad q_1 \ge q_2 \ge \cdots \ge q_d,
\end{equation}
since the sum constraint becomes $\sum q_i = 0$ and the gap constraint becomes $q_{i-1} - q_i = (p_{i-1}-p_i) - c_\delta \ge 0$. Expanding,
\[
  \sum_{i=1}^d p_i^2 = \sum_{i=1}^d (p_i^*)^2 + 2\sum_{i=1}^d p_i^* q_i + \sum_{i=1}^d q_i^2.
\]
Using $p_i^* = L/d - c_\delta\bigl(i - (d+1)/2\bigr)$ and $\sum_i q_i = 0$, the constants drop out and
\[
  \sum_{i=1}^d p_i^* q_i = -c_\delta \sum_{i=1}^d i\, q_i.
\]
By Chebyshev's sum inequality, applied to the increasing sequence $(i)_{i=1}^d$ and the nonincreasing sequence $(q_i)_{i=1}^d$,
\[
  d \sum_{i=1}^d i\, q_i \;\le\; \Bigl(\sum_{i=1}^d i\Bigr)\Bigl(\sum_{i=1}^d q_i\Bigr) = 0,
\]
so $\sum_i p_i^* q_i \ge 0$. Therefore
\[
  \sum_{i=1}^d p_i^2 \;\ge\; \sum_{i=1}^d (p_i^*)^2 + \sum_{i=1}^d q_i^2 \;\ge\; \sum_{i=1}^d (p_i^*)^2,
\]
which proves the bound. Equality forces $\sum_i q_i^2 = 0$, hence $\mathbf q = \mathbf 0$ and $\mathbf p = \mathbf p^*$. The minimizer is unique.
\end{proof}

\begin{remark}[The GSA slab]
\label{rem:anchored_fail}
A score of the form $f(\mathbf p)=\sum_i h(p_i-p_i^*)$, anchored on the GSA target above, is tempting because it measures distance to the variational profile. The anchor breaks symmetry and fixes a position-dependent target. More importantly, $\mathbf p^*$ is a worst-case boundary. LLL usually stops on the flatter side, strictly inside the constraint polytope~\eqref{eq:lovasz_gap}. A valid Lov\'asz swap can close a local gap and still move the two affected coordinates farther from $\mathbf p^*$, so an anchored selector can reject a swap that LLL itself accepts. Appendix~\ref{app:variants} confirms the same pattern empirically for residual-targeted G-DLLL variants.
\end{remark}

\begin{proposition}[Variance dissipation identity]
\label{prop:convergence}
Let $\mathbf p(t)=(p_1(t),\ldots,p_d(t))$ be the GSO log-norm profile after $t$ Lov\'asz swaps, $t=0,\ldots,N$, and let $\mathbf p^*$ be the constrained GSA profile of Proposition~\ref{prop:gsa_minvar} for the conserved sum $L=\sum_i p_i(t)$. Define the relative variance
\begin{equation}\label{eq:excess_var}
  V(t) \;=\; \sum_{i=1}^d p_i(t)^2 \;-\; \sum_{i=1}^d (p_i^*)^2.
\end{equation}
At step $s\in\{1,\ldots,N\}$, acting at position $k_s$, write the absolute pre- and post-swap log-gaps
\[
  \Delta_s = \bigl|p_{k_s-1}(s{-}1) - p_{k_s}(s{-}1)\bigr|,\qquad
  \Delta'_s = \bigl|p_{k_s-1}(s) - p_{k_s}(s)\bigr|,
\]
and set $\varepsilon_s = \tfrac{1}{2}(\Delta_s - \Delta'_s)$. Then $\varepsilon_s \in [0,\Delta_s)$, with $\varepsilon_s=0$ exactly when the swap is degenerate ($\mu_{k_s,k_s-1}=0$, a pure transposition), and
\begin{equation}\label{eq:vdrop}
  V(s{-}1) - V(s) \;=\; \tfrac12\bigl(\Delta_s^2 - (\Delta'_s)^2\bigr) \;=\; 2\varepsilon_s(\Delta_s - \varepsilon_s) \;\ge\; 0.
\end{equation}
Telescoping gives
\begin{equation}\label{eq:convergence_identity}
  V(0) - V(N) \;=\; \sum_{s=1}^{N} 2\varepsilon_s(\Delta_s - \varepsilon_s),
\end{equation}
hence the swap-count lower bound
\[
  N \;\ge\; \frac{V(0)-V(N)}{\max_s 2\varepsilon_s(\Delta_s - \varepsilon_s)}.
\]
\end{proposition}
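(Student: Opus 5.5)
The argument is a single-step computation followed by telescoping. Because $L=\sum_i p_i(t)$ is conserved by Eq.~\eqref{eq:det}, the reference profile $\mathbf p^*$ of Proposition~\ref{prop:gsa_minvar} is the same at every $t$. So the subtracted constant cancels in $V(s-1)-V(s)$, and only the change in $\sum_i p_i^2$ matters. By Eq.~\eqref{eq:swap}, swap $s$ changes only coordinates $k_s-1$ and $k_s$, and it preserves their sum $\sigma=p_{k_s-1}+p_{k_s}$. I will use the identity already used in Corollary~\ref{cor:variance}, $x^2+y^2=\tfrac12\bigl((x+y)^2+(x-y)^2\bigr)$, on the pre- and post-swap pairs. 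The $\sigma^2$ terms cancel, which gives $V(s-1)-V(s)=\tfrac12(\Delta_s^2-(\Delta_s')^2)$ directly. Writing $\Delta_s'=\Delta_s-2\varepsilon_s$ and expanding yields the second form $2\varepsilon_s(\Delta_s-\varepsilon_s)$.

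Next I establish the range $\varepsilon_s\in[0,\Delta_s)$ and characterize when $\varepsilon_s=0$. I split on $\mu=\mu_{k_s,k_s-1}$.

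\emph{Case $\mu\neq0$.} Theorem~\ref{thm:variance} gives the post-swap pair as $(a-\varepsilon,b+\varepsilon)$ with $a>b$ and $\varepsilon\in(0,a-b)$. The new signed gap is $a-b-2\varepsilon\in(-(a-b),a-b)$, so $\Delta_s'<\Delta_s$ and $\varepsilon_s>0$. I also need $\varepsilon_s<\Delta_s$, which means $\Delta_s'>-\Delta_s$. This is automatic because $\Delta_s'\ge0$. So $\varepsilon_s\le\Delta_s/2<\Delta_s$, given $\Delta_s>0$, which holds since the Lov\'asz violation forces $a>b$.

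\emph{Case $\mu=0$.} Eq.~\eqref{eq:swap} gives $r'_{k-1}=r_k$ and $r'_k=r_{k-1}$, a pure transposition. Hence $\Delta_s'=\Delta_s$ and $\varepsilon_s=0$.

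Together the two cases give the claimed range and the "exactly when" statement. Note that $\varepsilon_s$ here is the absolute-gap version, not the theorem's $\varepsilon$. In fact $\varepsilon_s=\min(\varepsilon,a-b-\varepsilon)$, and I will remark on this reconciliation. Nonnegativity of the drop then follows from $\varepsilon_s\ge0$ and $\Delta_s-\varepsilon_s>0$.

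Finally, summing Eq.~\eqref{eq:vdrop} over $s=1,\dots,N$ telescopes to Eq.~\eqref{eq:convergence_identity}. To get the swap-count bound, I bound each summand by the maximum summand, so $V(0)-V(N)\le N\max_s 2\varepsilon_s(\Delta_s-\varepsilon_s)$, and then divide. The division needs the maximum to be positive, which holds whenever some swap is non-degenerate. In the all-degenerate case, $V(0)=V(N)$ and the bound is vacuous or interpreted conventionally, so I will state that caveat. The only step needing care is keeping the absolute-gap $\varepsilon_s$ distinct from the signed $\varepsilon$ of Theorem~\ref{thm:variance}, since the labels may cross. Everything else is routine algebra.
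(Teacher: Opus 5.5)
Your proposal is correct and follows the paper's proof essentially step for step. The paper likewise cancels the constant reference term, uses local-sum preservation with $x^2+y^2=\tfrac12\bigl((x+y)^2+(x-y)^2\bigr)$, gets the range of $\varepsilon_s$ from Theorem~\ref{thm:variance} (non-degenerate case) and the pure transposition (degenerate case), and telescopes; your reconciliation $\varepsilon_s=\min(\varepsilon,\,a-b-\varepsilon)$ and your caveat about a zero denominator when every swap is degenerate are small clarifications that the paper's proof leaves implicit.
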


\begin{proof}
Since the reference sum $\sum_i (p_i^*)^2$ depends only on the conserved $L$, it is constant along the trajectory, so $V(s{-}1)-V(s) = \sum_i p_i(s{-}1)^2 - \sum_i p_i(s)^2$. At step $s$, only the pair at positions $k_s{-}1, k_s$ changes; write $a,b$ for the pre-swap log-norms and $a',b'$ for the post-swap log-norms. Every Lov\'asz swap preserves the local sum $a+b=a'+b'$ (the non-degenerate case is the T-transform of Theorem~\ref{thm:variance}; the degenerate case is a transposition). The identity $x^2+y^2=\tfrac12((x+y)^2+(x-y)^2)$ then gives
\[
  (a^2+b^2)-(a'^2+b'^2)
  \;=\; \tfrac12\bigl((a-b)^2 - (a'-b')^2\bigr)
  \;=\; \tfrac12\bigl(\Delta_s^2 - (\Delta'_s)^2\bigr),
\]
where the second equality uses $|a-b|=\Delta_s$ and $|a'-b'|=\Delta'_s$. Substituting $\Delta'_s = \Delta_s - 2\varepsilon_s$ yields $2\varepsilon_s(\Delta_s-\varepsilon_s)$. Theorem~\ref{thm:variance} gives $\Delta'_s\in[0,\Delta_s]$, with $\Delta'_s=\Delta_s$ only in the degenerate case, so $\varepsilon_s\in[0,\Delta_s)$ and $\varepsilon_s=0$ exactly there. Summing over $s$ gives~\eqref{eq:convergence_identity}, and $V(0)-V(N)\le N\max_s[V(s{-}1)-V(s)]$ gives the bound on $N$.
\end{proof}

% ===================================================================
\section{Algorithmic Consequences of Majorization}
\label{sec:algorithms}
% ===================================================================

For adjacent swaps, Corollary~\ref{cor:variance} shows that LLL decreases $\sum p_i^2$ one local move at a time. Deep insertions are less rigid, since a long move need not inherit the same majorization order. Even so, the adjacent-swap picture suggests two guides for nonlocal moves. One is to follow Schur-convex scores built from the log-profile. The other is to ask whether variance itself remains a useful descent principle.

\subsection{Deep insertions and the temperature spectrum}
\label{sec:deep}

A deep insertion rearranges the whole window between positions~$j$ and~$k$ in a single move, which makes the choice of objective inside the window a real design question. We score candidates with the standard deep-insertion recursion of Section~\ref{sec:prelim}~\cite{yamaguchi2017explicit}.

The adjacent-swap majorization law does not extend verbatim to deep insertions, because a long move can contain intermediate transpositions that widen a local gap, so a Lov\'asz-admissible insertion need not satisfy $\mathbf{p}' \prec \mathbf{p}$. The selectors below therefore enforce descent on their chosen score candidate by candidate, and termination still follows from the LLL potential~$\Phi$ of Section~\ref{sec:prelim}~\cite{schnorr1994lattice,fontein2014potlll}.

\subsubsection{Variance-greedy selector}

Our first selector is the direct variance-greedy rule, which we call \emph{Deep-Var}. At each iteration, it enumerates all admissible pairs $(k, j)$ satisfying the generalized Lov\'asz condition (Definition~\ref{def:gen_lovasz}), evaluates the post-insertion window through the standard cascade, and chooses the pair that maximally decreases $\sum p_i^2$. If no candidate gives positive descent, the algorithm stops. This requires $O(d^2)$ candidates per iteration, each needing $O(k-j)$ work to evaluate the window, hence $O(d^3)$ work per iteration.

\subsubsection{Thermal family}

Deep-Var and SS-GG are endpoints of a one-parameter family of greedy objectives. Define the thermal potential
\begin{equation}
\label{eq:thermal}
\phi_\alpha(\mathbf{r}) \;=\; \sum_{i=1}^{d} r_i^{\,\alpha}
  \;=\; \sum_{i=1}^{d} e^{\,2\alpha\, p_i}\,,
  \qquad \alpha > 0\,,
\end{equation}
with $r_i = \lVert\mathbf{b}_i^*\rVert^2$ and $r_i = e^{2p_i}$. Since $x \mapsto e^{2\alpha x}$ is convex for every $\alpha > 0$, $\phi_\alpha$ is a Schur-convex function of the log-profile $\mathbf{p}$. For non-degenerate adjacent swaps, where $\mathbf{p}' \prec \mathbf{p}$, $\phi_\alpha$ decreases strictly; degenerate swaps leave it unchanged. For deep insertions we define the \emph{thermal selector} to score each candidate $(k,j)$ by
\begin{equation}
\label{eq:thermal_mono}
\Delta\phi_\alpha(k,j)
  \;=\; \phi_\alpha(\mathbf{r}) - \phi_\alpha(\mathbf{r}')
\end{equation}
and pick the pair that maximizes $\Delta\phi_\alpha$, accepting only candidates with $\Delta\phi_\alpha > 0$. At $\alpha = 1$ this recovers SS-GG ($\phi_1 = \sum r_i$). A Taylor expansion in $\alpha$ around $\alpha = 0$ shows that the selector approaches Deep-Var in the limit $\alpha \to 0$:
$\Delta\phi_\alpha = 2\alpha^2\,\Delta(\sum p_i^2) + O(\alpha^3)$,
since the first-order term $2\alpha\,\Delta(\sum p_i)$ vanishes (sublattice determinant invariance), so for small $\alpha$ the thermal family ranks candidates identically to variance descent. As $\alpha \to \infty$, $\phi_\alpha^{1/\alpha} \to \max_i r_i$, a max-norm selector.

Each fixed value of $\alpha$ defines a concrete greedy selector with the same candidate loop as Deep-Var; only the score changes. In Section~\ref{sec:initadaptive} we choose $\alpha$ from the initial profile and call the resulting fixed-$\alpha$ rule \emph{Thermal-Adaptive}.

\subsection{Compatible score families}
\label{sec:canonicity}

The thermal family is one way to use the swap geometry. A natural question is which other scalar objectives are guaranteed to see every Lov\'asz swap as progress. Pot-DeepLLL~\cite{fontein2014potlll}, SS-DeepLLL~\cite{yasuda2017analysis,yasuda2019new}, and the X-GG framework~\cite{bhattacherjee2025greedy} each established this property by a direct calculation tailored to their own potential. The T-transform reading of an adjacent swap collapses those separate arguments into one. A non-degenerate Lov\'asz swap smooths a local kink in the log-profile, so any functional that strictly decreases under such smoothing tracks the LLL mechanism by construction. We call such scores \emph{Lov\'asz-compatible} and show that two sufficient classes, derived from majorization and from a weighted convexity argument, together cover the standard selectors.

\begin{definition}[Lov\'asz-compatible score]
\label{def:lovasz_compatible}
For $\mathbf{p} \in \mathbb{R}^d$, $k \in \{2,\ldots,d\}$, and $\varepsilon \in (0,\,p_{k-1}-p_k)$, the adjacent T-transform $T_{k,\varepsilon} = T_{k-1,k,\varepsilon}$ acts by
\[
  T_{k,\varepsilon}\mathbf{p}
  = (p_1,\ldots,p_{k-2},\,p_{k-1}-\varepsilon,\,p_k+\varepsilon,\,p_{k+1},\ldots,p_d).
\]
A score $f \colon \mathbb{R}^d \to \mathbb{R}$ is \emph{Lov\'asz-compatible} if, for every profile $\mathbf{p}$ and every position $k$ with $p_{k-1} > p_k$,
\begin{equation}\label{eq:lovasz_compatible}
  f(\mathbf{p}) \;>\; f\bigl(T_{k,\varepsilon}\mathbf{p}\bigr)
  \qquad \forall\,\varepsilon \in (0,\,p_{k-1} - p_k).
\end{equation}
\end{definition}

For a $C^1$ score, differentiating the path $g(\varepsilon) = f\bigl(T_{k,\varepsilon}\mathbf p\bigr)$ at $\varepsilon = 0$ shows that~\eqref{eq:lovasz_compatible} forces $\partial_{k-1} f(\mathbf p) \ge \partial_k f(\mathbf p)$ whenever $p_{k-1} > p_k$. The two classes below ensure strict descent along the entire T-transform path, without assuming differentiability. The first invokes majorization directly; the second uses a direct two-coordinate calculation.

\begin{proposition}[Two families of Lov\'asz-compatible scores]
\label{prop:canonical}
Each score in either class below is Lov\'{a}sz-compatible:
\begin{enumerate}
  \item[(i)] $f$ is symmetric and strictly Schur-convex~\cite{marshall1979inequalities}.
  \item[(ii)] $f(\mathbf{p}) = \sum_i w_i \psi(p_i)$, with weights $w_1 > w_2 > \cdots > w_d > 0$ ordered with the profile, and $\psi$ a convex strictly increasing function.
\end{enumerate}
The strict weight ordering is essential when $\psi$ is affine. If the weights flatten to equality, class~(ii) remains Lov\'asz-compatible only in the strictly convex case, where it becomes a symmetric separable instance of class~(i).
\end{proposition}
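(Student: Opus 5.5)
Class (i) follows directly from the definitions. For class (ii) I reduce to a two-coordinate comparison, with the case split driven by whether the weighted sum changes to first order. Each closing claim then follows from these two computations.

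\emph{Class (i).} Fix $\mathbf p$, a position $k$ with $p_{k-1}>p_k$, and $\varepsilon\in(0,p_{k-1}-p_k)$. By the definition of T-transform, $T_{k,\varepsilon}\mathbf p$ is obtained from $\mathbf p$ by a single T-transform. By the definition of majorization, $T_{k,\varepsilon}\mathbf p\prec\mathbf p$. Since $\varepsilon>0$, the two vectors differ. Strict Schur-convexity then gives $f(T_{k,\varepsilon}\mathbf p)<f(\mathbf p)$, which is \eqref{eq:lovasz_compatible}. Nothing beyond Theorem~\ref{thm:variance}'s reading of the swap is needed; symmetry is listed only because it is standard in the Marshall--Olkin setting.

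\emph{Class (ii).} Only coordinates $k-1$ and $k$ change. Write $a=p_{k-1}>b=p_k$, $u=w_{k-1}>v=w_k>0$, and
\[
D(\varepsilon)=u\bigl(\psi(a)-\psi(a-\varepsilon)\bigr)-v\bigl(\psi(b+\varepsilon)-\psi(b)\bigr).
\]
I must show $D(\varepsilon)>0$ for $\varepsilon\in(0,a-b)$. I split the term $u(\cdot)$ as $v(\cdot)+(u-v)(\cdot)$, so that
\[
D(\varepsilon)=v\Bigl[\bigl(\psi(a)-\psi(a-\varepsilon)\bigr)-\bigl(\psi(b+\varepsilon)-\psi(b)\bigr)\Bigr]+(u-v)\bigl(\psi(a)-\psi(a-\varepsilon)\bigr).
\]

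For the bracket, note that $a-\varepsilon>b$, so the interval $[a-\varepsilon,a]$ lies to the right of $[b,b+\varepsilon]$ in the sense of left endpoints. Both intervals have length $\varepsilon$, and $a-\varepsilon\ge b$. Slopes of chords of a convex function are nondecreasing in the endpoints, so the chord slope on $[a-\varepsilon,a]$ is at least that on $[b,b+\varepsilon]$. Hence the bracket is $\ge 0$.

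For the second term, $u-v>0$, and strict monotonicity of $\psi$ gives $\psi(a)-\psi(a-\varepsilon)>0$. So $D(\varepsilon)>0$. I expect the main care point to be exactly this chord-slope comparison, in particular checking $a-\varepsilon\ge b$ so the intervals are correctly ordered even though they may overlap. I would cite the three-chord lemma for it.

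\emph{Closing claims.} If $\psi$ is affine, $\psi(x)=\alpha x+\beta$ with $\alpha>0$, the bracket vanishes identically. Then $D=(u-v)\alpha\varepsilon$, which is positive iff $u>v$. With equal weights $D\equiv0$, so compatibility fails; this shows the strict weight ordering is essential.

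If the weights are equal and $\psi$ is strictly convex, $f=w\sum_i\psi(p_i)$ is symmetric and separable. Such a sum with strictly convex $\psi$ is strictly Schur-convex, by the standard Marshall--Olkin result, or by strict chord-slope monotonicity when $a-\varepsilon>b$. So it falls under class (i). Conversely, the same computation with equal weights reduces $D$ to $v$ times the bracket. The bracket is strictly positive precisely when the strict chord inequality holds, which is the strictly convex case.
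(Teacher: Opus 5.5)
Your proof is correct and follows the paper's argument. Class (i) comes directly from strict Schur-convexity applied to $T_{k,\varepsilon}\mathbf p \prec \mathbf p$, and class (ii) uses the same split of the score drop into a $w_k$-weighted equal-weight smoothing term plus $(w_{k-1}-w_k)\bigl(\psi(a)-\psi(a-\varepsilon)\bigr)$. The differences are cosmetic: you justify the smoothing bracket by the chord-slope lemma where the paper cites Schur-convexity of $\sum_i\psi(p_i)$, and you spell out the affine and equal-weight cases slightly more explicitly.
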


\begin{proof}
Both classes reduce to the active pair. Set $a = p_{k-1} > b = p_k$ and let $\varepsilon \in (0,\, a-b)$.

\smallskip\noindent\emph{(i)} A T-transform produces $T_{k,\varepsilon}\mathbf p \prec \mathbf p$ with strict majorization for $\varepsilon \in (0,\,a-b)$, so strict Schur-convexity gives
\[
  f(T_{k,\varepsilon}\mathbf p) < f(\mathbf p).
\]
That is exactly Lov\'asz-compatibility.

\smallskip\noindent\emph{(ii)} The score drop splits into an equal-weight smoothing term and an extra reward for moving mass away from the earlier, more heavily weighted coordinate. Writing $w_{k-1} = w_k + (w_{k-1}-w_k)$,
\[
  f(\mathbf p) - f(T_{k,\varepsilon}\mathbf p)
  = w_k\bigl[\psi(a)+\psi(b) - \psi(a{-}\varepsilon) - \psi(b{+}\varepsilon)\bigr]
  + (w_{k-1}-w_k)\bigl[\psi(a) - \psi(a{-}\varepsilon)\bigr].
\]
The first bracket is nonnegative because convexity of $\psi$ makes $\sum \psi(p_i)$ Schur-convex, so equal-weight smoothing cannot increase it. The second bracket is strictly positive because $w_{k-1} > w_k$ and $\psi$ is strictly increasing. Their sum is therefore strictly positive, which is~\eqref{eq:lovasz_compatible}. Pot-DeepLLL's affine choice $\psi(p) = 2p$ shows why strict monotonicity of $\psi$ already controls the second term; convexity is only needed for the first.

The final statement follows from the same decomposition. When the weights are equal, the second term vanishes; strict descent must then come from strict convexity in the first term. For affine $\psi$, the equal-weight score is just a multiple of $\sum_i p_i$ and is conserved by a T-transform.
\end{proof}

\begin{corollary}[Standard separable instances]
\label{cor:standard_scores}
If $\psi$ is strictly convex, then $f(\mathbf p)=\sum_i \psi(p_i)$ is Lov\'asz-compatible. In particular, the thermal family $\phi_\alpha$~\eqref{eq:thermal} is Lov\'asz-compatible. The $r$-gradient is $\partial_{r_i}\phi_\alpha=\alpha r_i^{\alpha-1}$; it is independent of both position and magnitude exactly at $\alpha=1$, where $\phi_1=\sum_i r_i$ is SS-GG.

If $w_1>\cdots>w_d>0$ and $\psi$ is convex and strictly increasing, then $\sum_i w_i\psi(p_i)$ is Lov\'asz-compatible. Pot-DeepLLL~\cite{fontein2014potlll} is the affine instance $\psi(p)=2p$, $w_i=d-i+1$, and the mixed family $f_{\alpha,\beta}(\mathbf r)=\sum_i(d-i+1)^\beta r_i^\alpha$ is the exponential instance $\psi(p)=e^{2\alpha p}$, $w_i=(d-i+1)^\beta$.
\end{corollary}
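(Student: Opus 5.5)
The corollary is a direct specialization of Proposition~\ref{prop:canonical}. The plan is to check that each named score lies in one of its two classes, with one small adjustment for class~(i).

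\emph{First statement.} For strictly convex $\psi$, I would show directly that $f(\mathbf p)=\sum_i\psi(p_i)$ is Lov\'asz-compatible. This avoids having to quote the general fact that separable sums of strictly convex functions are strictly Schur-convex. Only the active pair changes, so with $a=p_{k-1}>b=p_k$ and $\varepsilon\in(0,a-b)$ it suffices to prove
\[
\psi(a-\varepsilon)+\psi(b+\varepsilon)<\psi(a)+\psi(b).
\]
Both $a-\varepsilon$ and $b+\varepsilon$ lie strictly inside $(b,a)$. Write $a-\varepsilon=\lambda a+(1-\lambda)b$ with $\lambda\in(0,1)$; then $b+\varepsilon=(1-\lambda)a+\lambda b$. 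Adding the two strict convexity inequalities gives the claim.

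For the thermal family, take $\psi(x)=e^{2\alpha x}$. Its second derivative is $4\alpha^2e^{2\alpha x}>0$, so it is strictly convex for every $\alpha>0$. Then $\phi_\alpha=\sum_i\psi(p_i)$ is a special case of the statement just proved. The gradient claim is a one-line calculation, $\partial_{r_i}\phi_\alpha=\alpha r_i^{\alpha-1}$. This is independent of $r_i$ exactly when $\alpha-1=0$, and it never depends on the index $i$. At $\alpha=1$ it reduces to $\phi_1=\sum_i r_i$, the SS objective used by SS-GG.

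\emph{Second statement.} This is verbatim class~(ii) of Proposition~\ref{prop:canonical}, so it only requires citation. For the two instances, I would check the hypotheses. Pot-DeepLLL's log-potential is $\ln\mathrm{Pot}=\sum_i(d-i+1)\ln r_i=\sum_i(d-i+1)\,2p_i$. Here $\psi(p)=2p$ is affine, hence convex and strictly increasing, and $w_i=d-i+1$ is strictly decreasing and positive. I would add a short remark that $\mathrm{Pot}$ and $\ln\mathrm{Pot}$ induce the same descent order, since $\ln$ is monotone. For the mixed family, $f_{\alpha,\beta}=\sum_i(d-i+1)^\beta e^{2\alpha p_i}$, with $\psi(p)=e^{2\alpha p}$ convex and strictly increasing for $\alpha>0$. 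The weights $(d-i+1)^\beta$ are strictly decreasing and positive when $\beta>0$.

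\emph{Main obstacle.} The only real subtlety is the range of $\beta$. At $\beta=0$ the weights are equal, so class~(ii) does not apply, but $f_{\alpha,0}=\phi_\alpha$ is still covered by the first statement. I would therefore state $\beta>0$ for the class~(ii) reading and note that $\beta=0$ falls back to the thermal family. Everything else is routine verification.
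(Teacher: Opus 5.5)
Your proposal is correct and follows the paper's proof. The thermal-family, gradient, Pot-DeepLLL (via $\ln\mathrm{Pot}$) and mixed-family steps are the same verifications. The only methodological difference is the first statement: you prove $\psi(a-\varepsilon)+\psi(b+\varepsilon)<\psi(a)+\psi(b)$ directly from strict convexity, whereas the paper cites strict Schur-convexity of separable strictly convex sums and invokes Proposition~\ref{prop:canonical}(i). Your route is slightly more self-contained.

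Your $\beta=0$ caveat is also right, and the paper's proof passes over it. Class~(ii) needs strictly ordered weights, so $f_{\alpha,0}=\phi_\alpha$ has to be covered by the first statement, as the last sentence of Proposition~\ref{prop:canonical} anticipates.
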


\begin{proof}
For $f=\sum_i\psi(p_i)$, the score is symmetric. A symmetric separable sum with strictly convex $\psi$ is strictly Schur-convex~\cite{marshall1979inequalities}, so Proposition~\ref{prop:canonical}(i) applies. The thermal family is the choice $\psi(p)=e^{2\alpha p}$, which is strictly convex for $\alpha>0$, and the statement about $\alpha=1$ follows by differentiating in the variables $r_i$.

The weighted statement is exactly Proposition~\ref{prop:canonical}(ii), with the listed choices of $\psi$ and $w_i$. For Pot-DeepLLL, $\mathrm{Pot}(B)=\prod_i r_i^{d-i+1}$, and taking logarithms gives $\ln\mathrm{Pot}(B)=\sum_i 2(d-i+1)p_i = 2\Phi$, with $\Phi$ the LLL potential of Section~\ref{sec:prelim}, so descent on $\mathrm{Pot}$ is equivalent to descent on $f$.
\end{proof}

Thermal-Adaptive sits in the symmetric Schur-convex class, Pot-DeepLLL in the weighted separable class, and SS-GG is the point $\alpha=1$ of the thermal family, where the score is linear in the squared GSO norms and has constant $r$-gradient.

\begin{remark}[Two-position update]
\label{rem:ssgg_inc}
The cascade has one useful implementation consequence. As the candidate insertion point $j$ moves downward, only two post-insertion entries change, so any separable score can be updated incrementally. At $\alpha=1$, the SS-GG score simplifies further. Using $P_\ell-P_{\ell+1}=\mu_{k,\ell}^2 r_\ell$, the score drop satisfies
\begin{equation}
\label{eq:ssgg_inc}
\Delta\phi_1(k,j) \;=\; \sum_{\ell=j}^{k-1} \mu_{k,\ell}^{\,2}\, r_\ell \left( \frac{r_\ell}{P_\ell} - 1 \right),
\end{equation}
an identity already used in the SS-DeepLLL reference implementation~\cite{yasuda2019new,bhattacherjee2025greedy}. Thus SS-GG is not only the point of the thermal family at which the gradient is independent of position and magnitude; it is also the cheapest member to score in the standard descending-$j$ loop. For $\alpha\ne1$, the same two-entry update remains simple but no longer collapses to this recurrence.
\end{remark}

\subsection{Thermal-Adaptive selector}
\label{sec:initadaptive}

SS-GG~\cite{bhattacherjee2025greedy} corresponds to $\alpha = 1$ in the thermal family and is the distinguished point of Corollary~\ref{cor:standard_scores}: its $r$-gradient is independent of position and magnitude, so every admissible candidate is scored only by how much it changes $\sum r_i$, with no per-position weighting. Thermal-Adaptive stays within the same strictly Schur-convex family but lets the initial profile choose $\alpha$, then keeps that value fixed throughout reduction. On $q$-ary inputs it lands at $\alpha = 1$ and recovers SS-GG exactly. On flatter profiles it shifts to $\alpha > 1$ and reduces the operation count by 8 to 15\% in our benchmarks (Section~\ref{sec:experiments}).

The mechanism is easiest to see through the per-entry gradient $\partial\phi_\alpha / \partial r_i = \alpha\, r_i^{\,\alpha-1}$. At $\alpha = 1$ this is constant, so all norms contribute equally to the score. That suits heterogeneous profiles, where the largest $r_i$ already dominate changes in $\sum r_i$. On homogeneous profiles the $r_i$ are similar, many candidate insertions produce nearly the same $\Delta(\sum r_i)$, and the selector becomes nearly blind. Raising $\alpha$ above~1 makes the gradient grow with~$r_i$, rewards insertions that shrink the largest norms, and breaks those near-ties. On $q$-ary lattices the opposite regime applies: the bimodal initial profile is exaggerated by large $\alpha$ until the selector becomes myopic, with operation counts growing roughly two orders of magnitude as $\alpha$ moves from 1 to 5 (Figure~\ref{fig:thermal_spectrum}a).

To turn this picture into a concrete rule, we need a single number that says how flat the input profile is. Write
\begin{equation}
  \mu_0 \;=\; \frac{1}{d}\sum_{i=1}^{d} \ln r_{0,i},
  \qquad
  \sigma_0 \;=\; \sqrt{\tfrac{1}{d}\sum_{i=1}^{d} \bigl(\ln r_{0,i} - \mu_0\bigr)^{2}},
  \qquad
  \mathrm{CV}_0 \;=\; \frac{\sigma_0}{|\mu_0|},
\end{equation}
where $\mathbf{r}_0 \in \mathbb{R}^d_{>0}$ is the initial GSO norm-squared vector and the standard deviation is taken in log-space. The schedule itself is a power-law that maps a flat profile (small $\mathrm{CV}_0$) to a hot $\alpha$ and a heterogeneous profile (large $\mathrm{CV}_0$) to a cold $\alpha$:
\begin{equation}
\label{eq:adaptive_alpha}
  \alpha_0
  \;=\; \max\!\Bigl(\alpha_{\min},\;
        \Bigl(\frac{2}{1 + \mathrm{CV}_0}\Bigr)^{\!\gamma}\Bigr).
\end{equation}

Why $\mathrm{CV}_0$ rather than the obvious alternative? The log of the ratio between the largest and smallest per-entry gradient is exactly $(\alpha-1)(\max_i \ln r_i - \min_i \ln r_i)$, so the natural measure of profile spread is the log-space range. We use $\sigma_0/|\mu_0|$ instead because it is a smoother surrogate for the same quantity: the standard deviation responds to the bulk of the profile rather than to the worst pair, and dividing by $|\mu_0|$ removes the dependence on overall scale. Both choices order our benchmark profiles in the same way; $\mathrm{CV}_0$ has the practical advantage that every $q$-ary profile we tested gives $\mathrm{CV}_0 \approx 1.0$, which makes the next anchor condition exact rather than approximate.

The normalization is calibrated against the idealized $q$-ary picture. Half of the $r_i$ sit near $q^2$ and half near $1$, so $\ln r_i$ is approximately a two-point distribution with masses at $2\ln q$ and $0$. Its mean is $\ln q$ and its standard deviation is also $\ln q$, giving $\mathrm{CV}_0 = 1$. Plugging this into~\eqref{eq:adaptive_alpha} gives $\alpha_0(\mathrm{CV}_0{=}1) = 1$, the \emph{$q$-ary anchor}, and that is what makes Thermal-Adaptive recover SS-GG exactly on $q$-ary lattices.

The exponent $\gamma$ then controls how steeply $\alpha_0$ rises as the profile flattens. The fixed-$\alpha$ sweep (Figure~\ref{fig:thermal_spectrum}a) shows that cost curves are flat for homogeneous profiles once $\alpha \gtrsim 2$, so $\gamma$ only needs to place $\alpha_0$ above this plateau. We use $\gamma = 2$, which gives $\alpha_0 \approx 2.9$ at $\mathrm{CV}_0 \approx 0.17$, and the benchmark improvement is stable across $\gamma \in [1.5, 3]$. A fixed $\alpha > 1$ cannot replace adaptation: $\alpha = 3$ improves 5 to 8\% on flat profiles but degrades 24 to 28\% on $q$-ary (Figure~\ref{fig:thermal_spectrum}a).

The floor $\alpha_{\min} = 0.4$ guards against degenerate inputs where $|\mu_0| \to 0$ blows up $\mathrm{CV}_0$ and pushes the formula below~$1$. On every lattice in our benchmark $|\mu_0|$ stays well away from zero and the $\max$ is never active, so the value of $\alpha_{\min}$ in $(0, 1)$ does not change any reported number.

The parameter $\alpha$ is computed once and held fixed. Continuous re-adaptation from the evolving profile is counterproductive. As reduction smooths the profile, $\mathrm{CV}$ drops, pushing $\alpha$ above 1 on $q$-ary lattices and into the high-cost regime, increasing operation counts by 15 to 35\% relative to SS-GG. Appendix~\ref{app:variants} reports a periodic re-estimation variant (Thermal-Sched); its $(W, \delta_0)$ trade-off is nearly flat in the re-estimation period, which rules out cadence as a useful knob.

For every $\alpha > 0$, $\phi_\alpha$ is Schur-convex, so the selector enforces $\Delta\phi_\alpha > 0$ and termination follows from $\Phi$ as before. The per-candidate cost matches SS-GG: one cascade evaluation plus a single power-sum over $k - j + 1$ entries.

\subsection{Geodesic Deep-LLL (G-DLLL)}
\label{sec:gdlll}

Deep-Var maximizes $\Delta V(k,j)$ per insertion, targeting insertion count. A deep insertion, however, still pays for the cascaded GSO update across the window, whose length is $k-j$ in equivalent adjacent swaps. G-DLLL lowers the equivalent-swap count $W$ even though it uses many more insertions $N$. Minimizing $W$ instead of $N$ calls for a different objective. We treat the resulting selector as a theory-side consequence of the variance-dissipation picture rather than as the paper's main practical algorithm.

\begin{proposition}[ROI lower bound on equivalent work]
\label{prop:roi}
Define the \emph{ROI efficiency} $\eta(k,j) = \Delta V(k,j)/(k-j)$ for $k > j$, the variance dissipation per unit of cascade update work.
Let $W = \sum_s (k_s - j_s)$ be the total number of equivalent adjacent swaps
over all $T$ insertions, and let
$\Delta V_{\mathrm{tot}} = \sum_i p_i(0)^2 - \sum_i p_i(T)^2$
be the total profile energy dissipated.  Then
\begin{equation}
\label{eq:roi_lb}
  W \;\geq\; \frac{\Delta V_{\mathrm{tot}}}{\displaystyle\max_{k,j}\,\eta(k,j)}.
\end{equation}
\end{proposition}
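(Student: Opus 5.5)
The bound follows by telescoping the per-insertion variance drops and controlling each drop by its cascade length; it is the same argument as the swap-count bound of Proposition~\ref{prop:convergence}, with the count $N$ replaced by the weighted count $W$. Write $\Delta V_s = \sum_i p_i(s{-}1)^2 - \sum_i p_i(s)^2$ for the dissipation at the $s$-th insertion, which acts on the pair $(k_s,j_s)$ with $k_s>j_s$. By definition $\Delta V_s = \Delta V(k_s,j_s)$, evaluated at the profile current at step $s$. The total dissipation telescopes:
\[
  \Delta V_{\mathrm{tot}} = \sum_{s=1}^{T} \Delta V_s .
\]
No conservation argument beyond this is needed, since $\Delta V_{\mathrm{tot}}$ is defined directly as a difference of sums of squares.

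The key step is to rewrite each term as $\Delta V_s = (k_s-j_s)\,\eta(k_s,j_s)$, which is exactly the definition of $\eta$. Then I bound $\eta(k_s,j_s)\le \eta_{\max}:=\max_{k,j}\eta(k,j)$. Summing over $s$ gives
\[
  \Delta V_{\mathrm{tot}} = \sum_s (k_s-j_s)\,\eta(k_s,j_s) \le \eta_{\max}\sum_s (k_s-j_s) = \eta_{\max}\,W .
\]
Because every weight $k_s-j_s$ is a positive integer, the inequality $\eta\le\eta_{\max}$ can be multiplied through termwise. This remains valid even when some $\eta(k_s,j_s)$ are negative, since deep insertions need not be majorization moves. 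Dividing by $\eta_{\max}$ yields~\eqref{eq:roi_lb}.

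The main obstacle is interpreting the maximum correctly. $\eta(k,j)$ depends on the current profile, so $\max_{k,j}\eta$ has to mean the maximum over all candidate pairs and over all profiles visited along the trajectory, that is, $\max_{s}\max_{k,j}\eta^{(s)}(k,j)$. I would state this explicitly, exactly as $\max_s$ appears in Proposition~\ref{prop:convergence}; the argument only ever uses $\eta(k_s,j_s)\le\eta_{\max}$ at realized steps.

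The division also needs $\eta_{\max}>0$. This is the case of interest: whenever $\Delta V_{\mathrm{tot}}>0$, some step has $\Delta V_s>0$, so $\eta_{\max}>0$. If instead $\Delta V_{\mathrm{tot}}\le 0$, the bound is trivial because $W\ge 0$. I would record these two cases briefly, remarking that for G-DLLL, which accepts only candidates of positive dissipation, every realized $\eta(k_s,j_s)$ is positive.
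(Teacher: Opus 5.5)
Your proof is correct and follows the paper's own argument: telescope $\Delta V_{\mathrm{tot}}=\sum_s \Delta V_s$, rewrite $\Delta V_s=(k_s-j_s)\,\eta(k_s,j_s)$, bound each efficiency by the global maximum to get $\Delta V_{\mathrm{tot}}\le W\max_{k,j}\eta(k,j)$, and divide. Your reading of the maximum as ranging over all visited profiles, and your check that $\max_{k,j}\eta(k,j)>0$, make explicit what the paper leaves implicit by assuming $\Delta V_s>0$ at every accepted insertion. One slip: the case $\Delta V_{\mathrm{tot}}\le 0$ is not automatically trivial, because it also needs $\max_{k,j}\eta(k,j)>0$. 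If the maximum were negative, dividing $\Delta V_{\mathrm{tot}}\le \eta_{\max}W$ by it would reverse the inequality and the bound could fail, so the statement really rests on the positive-drop hypothesis (or at least a positive maximum), which is what the paper assumes.
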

\begin{proof}
Each insertion decreases $\sum p_i^2$ by $\Delta V_s > 0$, so the drops telescope to $\Delta V_{\mathrm{tot}} = \sum_s \Delta V_s$. For insertion $s$, the definition of $\eta$ gives $\Delta V_s = (k_s-j_s)\,\eta(k_s,j_s)$, which separates the insertion depth from the per-unit energy gain. Bounding each efficiency by the global maximum and collecting the depths as $\sum_s(k_s-j_s)=W$ gives
\[
  \Delta V_{\mathrm{tot}}
  = \sum_s (k_s-j_s)\,\eta(k_s,j_s)
  \;\leq\; W \cdot \max_{k,j}\eta(k,j).
\]
Rearranging gives~\eqref{eq:roi_lb}.
\end{proof}

The bound~\eqref{eq:roi_lb} tightens when $\eta$ is large, which suggests an analogue of Deep-Var that maximizes efficiency rather than absolute descent. We call the resulting selector \emph{Geodesic Deep-LLL (G-DLLL)}. At each iteration, it enumerates the admissible pairs $(k,j)$ satisfying the generalized Lov\'asz condition (Definition~\ref{def:gen_lovasz}), scores each by the ROI efficiency $\eta(k,j)$ of Proposition~\ref{prop:roi}, and chooses the pair maximizing $\eta$. If no candidate gives positive descent, the algorithm stops. Termination follows from the LLL potential~$\Phi$, which decreases strictly at each accepted insertion. When all candidates have the same depth $k-j$, the score $\eta$ and $\Delta V$ rank pairs identically, so G-DLLL reduces to Deep-Var.

\paragraph{Candidate shortlist.}
Enumerating all $\binom{d}{2}$ pairs at every step would scale poorly. Two observations cut the search without affecting the rule. First, the local deficit $(p_i - p_{i+1}) - c_\delta$~\eqref{eq:cdelta} is large only near the top of the profile, where GSO norms are most inflated relative to the target slope; on all tested lattice families, the top $K = \lceil d/3 \rceil$ source positions by descending deficit account for over 95\% of the insertable $\Delta V$ budget, and halving or doubling $K$ leaves the output unchanged. Second, an insertion that barely moves the profile still pays a full cascade update, so we require $\Delta V(k,j) > \tau \sum_i p_i^2$ with $\tau = 0.01$. The threshold tightens automatically as the profile flattens, relaxes on disordered inputs, and is empirically flat in $\tau \in [0.005, 0.02]$. Both filters are search heuristics; they do not alter the selection rule of Proposition~\ref{prop:roi}.

% ===================================================================
\section{Experimental Evaluation}
\label{sec:experiments}
% ===================================================================

We use $\delta = 0.99$ throughout. The main deep-insertion benchmarks use three lattice families spanning the range from homogeneous to strongly structured profiles: \emph{Gaussian} (i.i.d.\ entries $\sim \mathcal{N}(0, 25)$, $d \in \{20,30,\ldots,200\}$), \emph{$q$-ary} ($q = 1009$, $k = d/2$; the standard Hermite-normal-form construction $[q\,I_k, 0; A, I_{d-k}]$ due to Ajtai~\cite{ajtai1996generating} yields roughly $d/2$ Gram-Schmidt norms near~$q$ and $d/2$ near~$1$), and \emph{Goldstein-Mayer}~\cite{goldstein2003equidistribution} ($k = 1$, $q$ a random $10d$-bit prime, following the same convention as~\cite{chen2011bkz,bhattacherjee2025greedy}). The fixed-temperature sweep in Figure~\ref{fig:thermal_spectrum} also includes a \emph{Uniform} family with i.i.d.\ entries in $\{-10,\ldots,10\}$. Each benchmark cell uses $n = 30$ independent lattices. We report operation count, wall-clock time, root-Hermite factor~$\delta_0$, and final profile variance $\sum p_i^2$.

We use two implementation layers. Per-swap tracing and other analyses that need the internal swap trajectory run in instrumented Python/\texttt{fpylll} code. Performance benchmarking of deep insertions runs in the C++/\texttt{fplll} code released as \texttt{variationaLLL} (Section~\ref{sec:data_avail}), which we treat as the canonical implementation. The experiments test two points, the value of adaptive Schur-convex scoring and the separation between selector quality and update cost.

\paragraph{Deep-insertion selectors.}
We compare three Schur-convex deep-insertion selectors against the standard LLL baseline: Deep-Var, SS-GG~\cite{bhattacherjee2025greedy}, and Thermal-Adaptive. Pot-GG is Lov\'asz-compatible by Proposition~\ref{prop:canonical}(ii), but its benchmark against SS-GG was already settled in~\cite{bhattacherjee2025greedy}, where SS-GG was shown to achieve uniformly better output quality across all tested dimensions; rerunning that comparison here would add nothing. Plain DeepLLL is omitted for the same reason. BKZ is not a direct baseline for this question, because it changes both the search primitive and the tuning axis at once, while our aim is to compare selector rules within the LLL-style deep-insertion family. Our SS-GG implementation uses~\eqref{eq:ssgg_inc} throughout the admissibility loop, including the initializer at $\ell = k{-}1$; using~\eqref{eq:ssgg_inc} uniformly avoids a scaling mismatch we observed between the first term ($\ell = k{-}1$) and the loop body in the reference implementation of~\cite{bhattacherjee2025greedy}, reported separately as a patch upstream.\footnote{GitHub pull request \#1 against the Greedy-Global-LLL repository: \url{https://github.com/GG-LLL/Greedy-Global-LLL/pull/1}. The patch preserves the sign of each $\Delta_{\mathrm{SS}}$ term so convergence is unaffected, but changes the $(k, j)$ pair selected when the profile is steep.}

Table~\ref{tab:deep} reports representative dimensions, while Figures~\ref{fig:deep_gaussian},~\ref{fig:deep_qary},~\ref{fig:deep_goldstein}, and~\ref{fig:thermal_spectrum} show the main trends.

\begin{table}[t]
\centering
\caption{Deep-insertion selectors: mean operation count $N$ and mean equivalent-swap count $W$ over $n=30$ independent lattices per cell, with standard errors in parentheses ($\delta = 0.99$, C++ implementation). $\Delta_{\mathrm{SS}}$ is Thermal-Adaptive's percentage reduction in $N$ relative to SS-GG; the $d{=}40$ Goldstein-Mayer difference is within one standard error of zero.}
\label{tab:deep}
\small
\setlength{\tabcolsep}{3pt}
\resizebox{\textwidth}{!}{%
\begin{tabular}{llrrrrrrr}
\toprule
Family & $d$ & Deep-Var $N$ & SS-GG $N$ & Therm.\ $N$ & $\Delta_{\mathrm{SS}}$ & Deep-Var $W$ & SS-GG $W$ & Therm.\ $W$ \\
\midrule
Gaussian &  40 &   77 (3) &   71 (2) &   62 (2) & $+$13\% &   743 (24) &   736 (22) &   691 (19) \\
         &  80 &  201 (6) &  178 (5) &  158 (4) & $+$11\% &  2334 (47) &  2360 (43) &  2348 (42) \\
         & 120 &  348 (8) &  307 (8) &  260 (6) & $+$16\% &  5061 (75) &  5070 (67) &  5028 (56) \\
         & 160 &  521 (11) &  465 (9) &  404 (9) & $+$13\% &  9155 (110) &  9232 (99) &  9139 (118) \\
\addlinespace
$q$-ary  &  40 &  160 (3) &  155 (3) &  155 (3) & $0\%$   &  3432 (26) &  3298 (29) &  3298 (29) \\
         &  80 &  518 (11) &  626 (13) &  627 (13) & $0\%$   & 18100 (99) & 20561 (133) & 20572 (134) \\
         & 120 & 1074 (21) & 1927 (33) & 1927 (33) & $0\%$   & 46820 (199) & 66993 (399) & 66993 (399) \\
         & 160 & 1877 (46) & 5454 (106) & 5454 (106) & $0\%$   & 86423 (472) &169756 (1216) &169756 (1216) \\
\addlinespace
Goldstein-Mayer
         &  40 &   61 (3) &   61 (3) &   64 (3) & $-$5\%  &   504 (36) &   572 (38) &   681 (40) \\
         &  80 &  310 (11) &  393 (11) &  322 (10) & $+$18\% &  5331 (136) &  7791 (143) &  6168 (144) \\
         & 120 &  781 (27) & 1421 (40) &  864 (20) & $+$39\% & 20524 (352) & 36790 (476) & 25007 (295) \\
         & 160 & 1660 (39) & 4851 (88) & 1956 (46) & $+$60\% & 51221 (567) &125042 (1246) & 68306 (666) \\
\bottomrule
\end{tabular}
}% end resizebox
\setlength{\tabcolsep}{6pt}
\end{table}

All three deep-insertion selectors reduce operation counts sharply relative to standard LLL. On Gaussian lattices (Figure~\ref{fig:deep_gaussian}), Thermal-Adaptive has the lowest operation count throughout the measured range; at $d \in \{40,80,120\}$ it uses 13, 11, and 16\% fewer operations than SS-GG, with essentially the same equivalent-swap count~$W$ and slightly better output quality. On a flat profile the squared norms $r_i$ all sit close to a common value, so $\Delta(\sum r_i)$ is nearly the same for many admissible candidates and SS-GG ranks them by tiny differences. Raising $\alpha$ above 1 weights large $r_i$ more strongly, which restores a clear ordering among those near-ties and steers the selector to candidates that shrink the largest entries.

On $q$-ary lattices (Figure~\ref{fig:deep_qary}), Thermal-Adaptive collapses back to SS-GG by construction. The contrast there is with Deep-Var. Deep-Var scores by $\sum p_i^2$, treating the two $q$-ary modes (entries near $\tfrac{1}{2}\ln q$ and entries near $0$) symmetrically in log-space, so it favors moves that close the log-space gap between the modes. The Lov\'asz stopping set is reached after fewer such moves than SS-GG needs under its linear-$r$ objective, and the basis it returns has worse output quality. At $d=120$, Deep-Var uses 1074 insertions against 1927 for SS-GG and Thermal-Adaptive, but $\delta_0 = 1.01537$ versus $1.01378$.

Goldstein-Mayer lattices (Figure~\ref{fig:deep_goldstein}) are not a duplicate of the $q$-ary case. The half-rank $q$-ary construction has a sharp bimodal profile by design, and Thermal-Adaptive's calibration sets $\alpha = 1$ on it, so on $q$-ary the rule recovers SS-GG. Goldstein-Mayer profiles are intermediate: they have structure inherited from the random prime $q$ but no clean two-plateau split. There Thermal-Adaptive picks $\alpha > 1$ and is slightly worse at $d=40$, then overtakes SS-GG at $d \in \{80, 120, 160\}$, where its operation-count advantage grows to 18, 39, and 60\%, with $W$ reduced by 21 to 45\% and wall-clock time by 8 to 47\%. SS-GG retains a small $\delta_0$ edge on this family (about $10^{-3}$) but at roughly double the compute cost. Across the three families, the theory does not single out one best deep score; it isolates a useful objective family and a concrete profile-adaptive path inside it.

\begin{figure}[t]
\centering
\includegraphics[width=\textwidth]{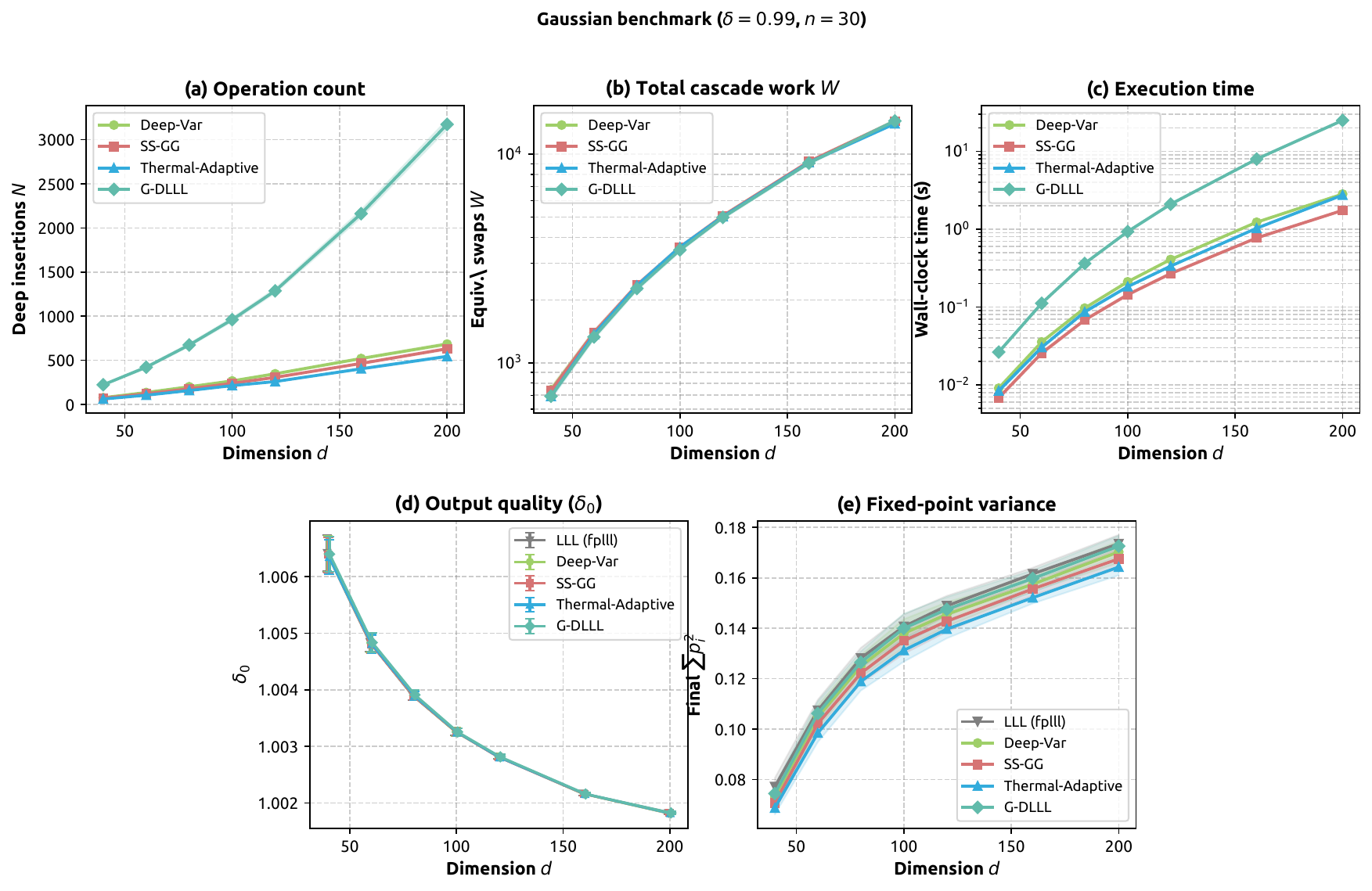}
\caption{Deep-insertion selectors on Gaussian lattices ($\mathcal{N}(0,25)$, $n=30$, $\delta=0.99$, $d$ up to 120, C++ implementation). Shaded bands show $\pm 1$ standard error. Panels show operation count $N$, equivalent-swap count $W$, total selector time, root-Hermite factor $\delta_0$, and final profile variance. Thermal-Adaptive uses 5 to 16\% fewer operations than SS-GG at comparable output quality.}
\label{fig:deep_gaussian}
\end{figure}

\begin{figure}[t]
\centering
\includegraphics[width=\textwidth]{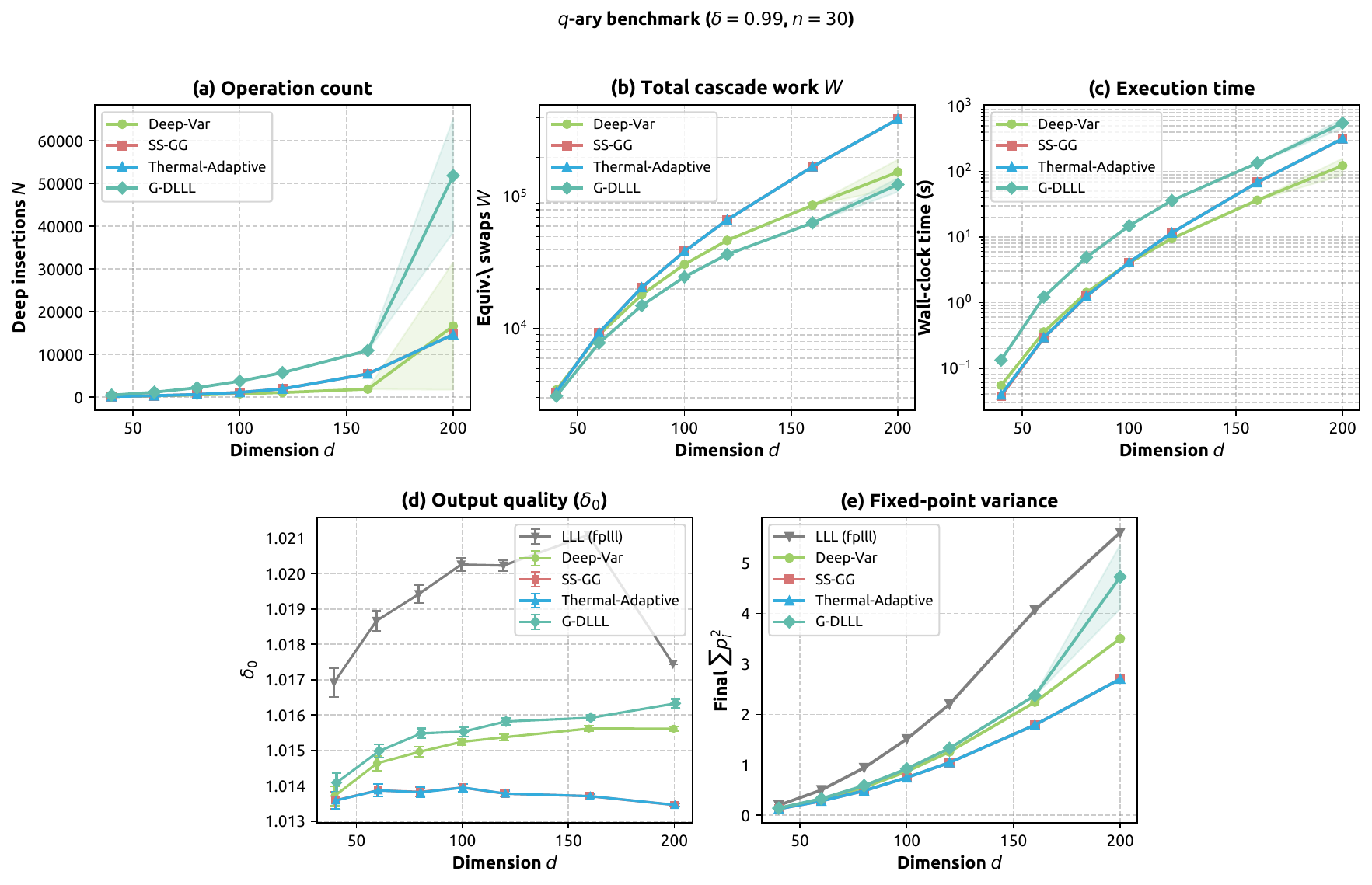}
\caption{Deep-insertion selectors on $q$-ary lattices ($q = 1009$, $k=d/2$, $n=30$, $\delta=0.99$, $d$ up to 120, C++ implementation). Shaded bands show $\pm 1$ standard error. Panels show operation count $N$, equivalent-swap count $W$, total selector time, root-Hermite factor $\delta_0$, and final profile variance. Thermal-Adaptive recovers SS-GG by construction, while Deep-Var uses fewer operations for $d \ge 80$ at slightly worse $\delta_0$.}
\label{fig:deep_qary}
\end{figure}

\begin{figure}[t]
\centering
\includegraphics[width=\textwidth]{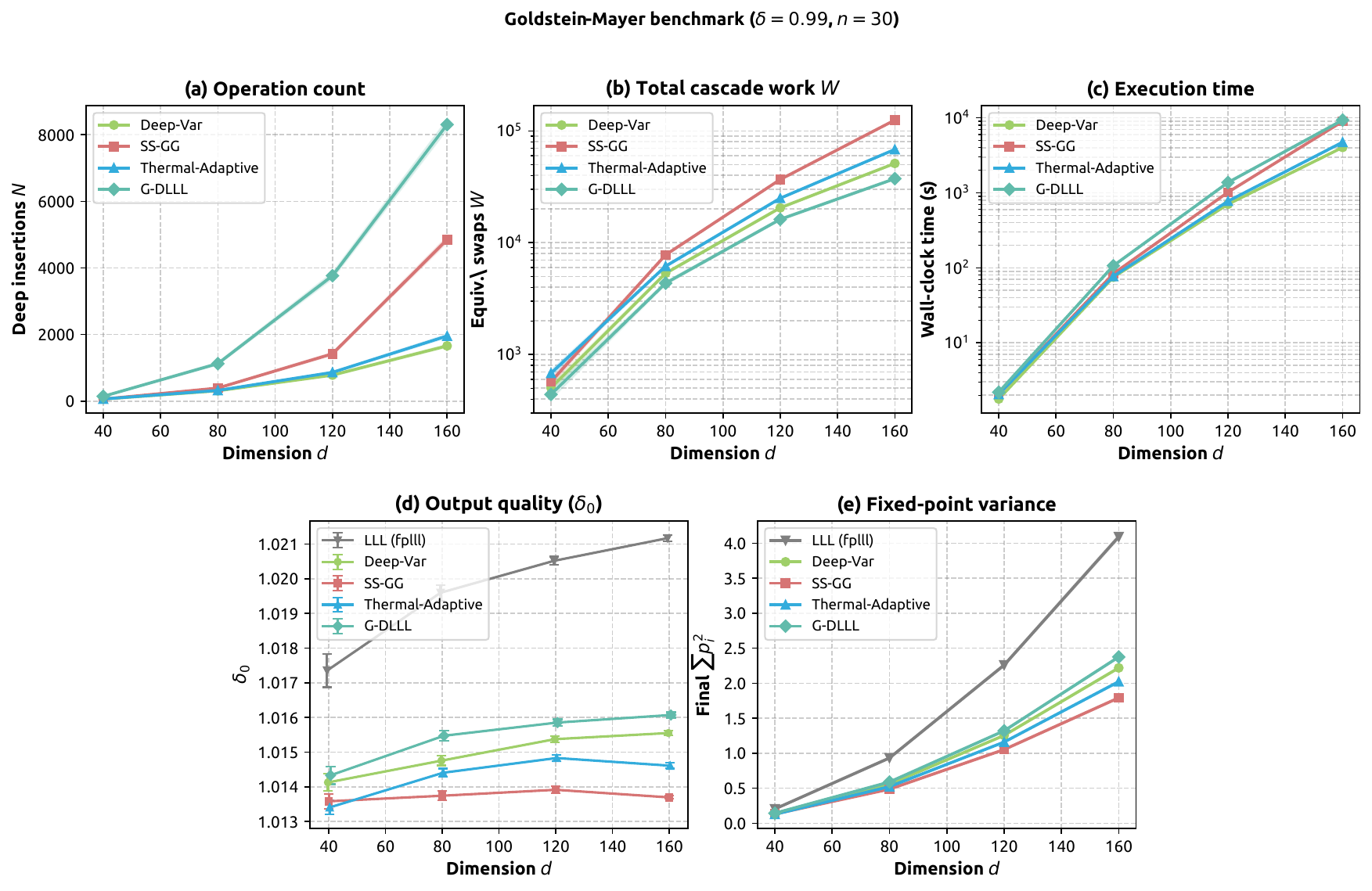}
\caption{Deep-insertion selectors on Goldstein-Mayer lattices ($k=1$, $q$ a random $10d$-bit prime, $n=30$, $\delta=0.99$, $d$ up to 120, C++ implementation). Shaded bands show $\pm 1$ standard error. Same panels as Figures~\ref{fig:deep_gaussian} and~\ref{fig:deep_qary}. The profile is intermediate between the Gaussian and $q$-ary regimes; Thermal-Adaptive picks $\alpha > 1$ and overtakes SS-GG from $d \ge 80$.}
\label{fig:deep_goldstein}
\end{figure}

\begin{figure}[t]
\centering
\includegraphics[width=\textwidth]{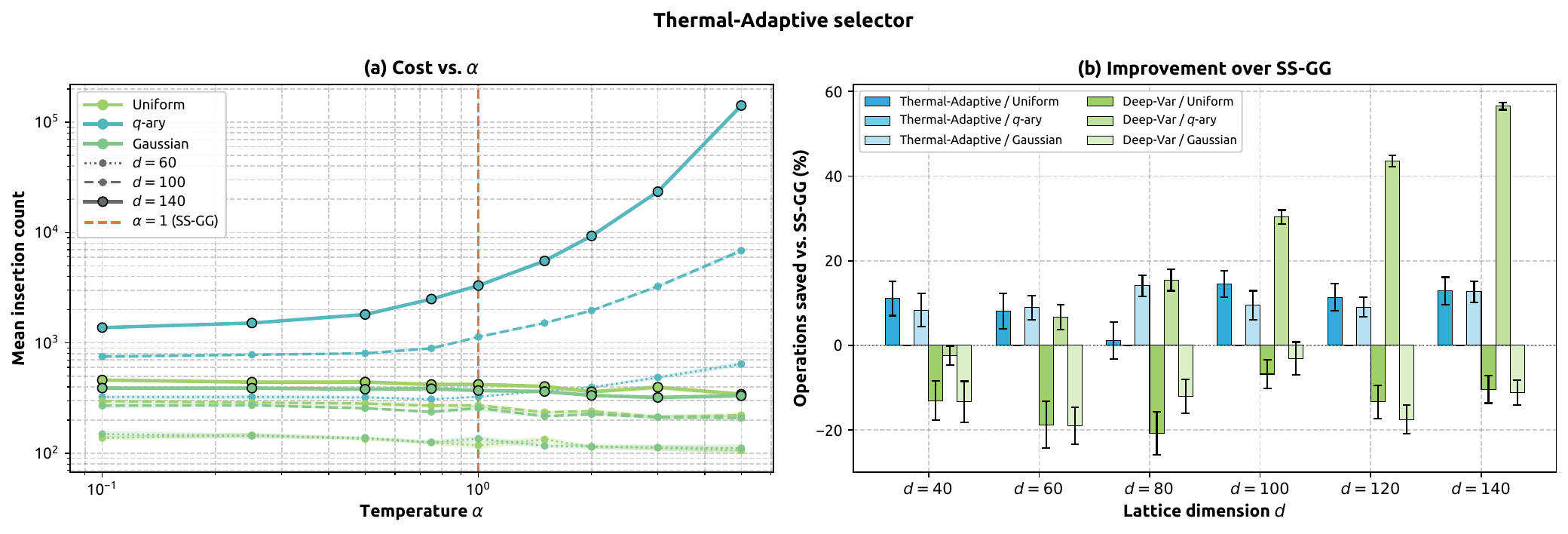}
\caption{Temperature spectrum and Thermal-Adaptive selector ($\delta=0.99$, $n=30$, C++ implementation). Panel~(a) sweeps fixed $\alpha$ at $d \in \{60,100,140\}$: Uniform and Gaussian inputs benefit from larger $\alpha$, while $q$-ary costs rise sharply beyond $\alpha \approx 1$, by roughly two orders of magnitude at $\alpha = 5$. Shaded bands are $\pm 1$ standard error; relative uncertainty is 4--8\% at each point, which is small on the three-decade log scale and renders as a narrow ribbon around each line. Panel~(b) shows percentage improvement over SS-GG across $d \in \{40,\ldots,140\}$, with whiskers denoting $\pm 1$ standard error. Thermal-Adaptive saves 8 to 15\% of operations on Uniform and Gaussian inputs at every tested dimension while matching SS-GG on $q$-ary by construction. Deep-Var follows a different pattern: on $q$-ary its saving over SS-GG grows with~$d$, from roughly $-2\%$ at $d{=}40$ to $+57\%$ at $d{=}140$, while on Uniform and Gaussian it stays near or below the SS-GG baseline with no clear trend.}
\label{fig:thermal_spectrum}
\end{figure}

\paragraph{G-DLLL.}
G-DLLL (Section~\ref{sec:gdlll}) targets equivalent-swap work rather than insertion count. We compare it against Deep-Var and SS-GG on the same benchmark suite, but the reading is different. G-DLLL is not the paper's main practical selector; it is a concrete test of the equivalent-work viewpoint behind Proposition~\ref{prop:roi}.

\begin{table}[t]
\centering
\caption{G-DLLL vs.\ Deep-Var and SS-GG: mean equivalent-swap count $W$
  and mean operation count $N$ ($\delta=0.99$, C++ implementation).
  $\Delta_W$ is the percentage reduction of G-DLLL over Deep-Var in $W$
  (total cascade depth).}
\label{tab:gdlll}
\scriptsize
\setlength{\tabcolsep}{1.5pt}
\begin{tabular}{llrrrrrr}
\toprule
Family & $d$ & Deep-Var $N$ & Deep-Var $W$ & SS-GG $W$ & G-DLLL $N$ & G-DLLL $W$ & $\Delta_W$ \\
\midrule
Gaussian &  40 &   77 &   743 &   736 &   223 &   692 & $+$7\%  \\
         &  80 &  201 &  2334 &  2360 &   674 &  2272 & $+$3\%  \\
         & 120 &  348 &  5061 &  5070 &  1287 &  4981 & $+$2\%  \\
         & 160 &  521 &  9155 &  9232 &  2162 &  9078 & $+$1\%  \\
\addlinespace
$q$-ary  &  40 &  160 &  3432 &  3298 &   465 &  3086 & $+$10\% \\
         &  80 &  518 & 18100 & 20561 &  2205 & 14982 & $+$17\% \\
         & 120 & 1074 & 46820 & 66993 &  5727 & 36603 & $+$22\% \\
         & 160 & 1877 & 86423 &169756 & 10922 & 63581 & $+$26\% \\
\addlinespace
Goldstein-Mayer
         &  40 &   61 &   504 &   572 &   143 &   440 & $+$13\% \\
         &  80 &  310 &  5331 &  7791 &  1126 &  4347 & $+$18\% \\
         & 120 &  781 & 20524 & 36790 &  3767 & 16203 & $+$21\% \\
         & 160 & 1660 & 51221 &125042 &  8306 & 37398 & $+$27\% \\
\bottomrule
\end{tabular}
\setlength{\tabcolsep}{6pt}
\end{table}

G-DLLL reduces equivalent-swap count $W$ by 1 to 7\% on Gaussian lattices, 10 to 26\% on $q$-ary, and 13 to 27\% on Goldstein-Mayer (Table~\ref{tab:gdlll}), with the largest reductions at the highest tested dimensions. Proposition~\ref{prop:roi} is built around exactly this regime, where insertions are long and the total cascade work concentrates on a few deep moves.

The wall-clock trade-off is unfavorable. Each deep insertion still pays the fixed cost of candidate scanning, size reduction, and GSO recomputation, while only the cascade update scales with depth. In the current implementation those fixed costs dominate, so the 3 to $5\times$ increase in insertion count overwhelms the saving in $W$. G-DLLL is therefore a selector motivated by the theory rather than by runtime. It makes the equivalent-work objective explicit and shows that minimizing $W$ is a different problem from minimizing runtime.

% ===================================================================
\section{Discussion}
\label{sec:discussion}
% ===================================================================

The experiments support a simple picture. At the adjacent-swap level, the T-transform law of Theorem~\ref{thm:variance} depends only on the local geometry of the violated Lov\'asz step, not on the input distribution, and the deep-insertion benchmarks of Section~\ref{sec:experiments} behave consistently with that local law across all three lattice families.

The data then split into two regimes. On flat profiles, especially Gaussian lattices, many deep-insertion candidates have almost the same SS score. Thermal-Adaptive improves operation counts by breaking those near-ties. Raising $\alpha$ above 1 steepens the gradient on the largest norms and turns a nearly blind selector into a directional one. That is why the gain appears in operation count without any visible loss in $\delta_0$ or final variance.

On $q$-ary lattices the story is different. A half-rank $q$-ary profile is bimodal, with roughly $d/2$ entries near $\tfrac{1}{2}\ln q$ and $d/2$ near zero, creating $\Theta(d^2/4)$ pairwise inversions and an $\Omega(d^2)$ floor for any adjacent-swap strategy. Deep insertions matter because they resolve $k-j$ inversions in one move. Within that regime, Deep-Var's log-space objective compresses the two plateaus and stops earlier, while SS-GG and Thermal-Adaptive preserve the full scale separation and continue to improve the basis.

DeepLLL, Pot-DeepLLL~\cite{fontein2014potlll}, SS-DeepLLL~\cite{yasuda2017analysis,yasuda2019new}, and the X-GG framework~\cite{bhattacherjee2025greedy} all pick a descent objective and optimize it greedily. The majorization viewpoint gives a common geometric explanation. Proposition~\ref{prop:canonical} sorts these objectives into two compatible classes: symmetric strictly Schur-convex scores, which contain the thermal family, and weighted separable scores with position-ordered weights, which contain Pot-DeepLLL and the mixed thermal-weighted hybrids. Within the thermal family, SS-GG is the unique point whose $r$-gradient is independent of position and magnitude (Corollary~\ref{cor:standard_scores}); Thermal-Adaptive moves $\alpha$ within that family and improves the flat-profile cases where that fixed endpoint is least informative. Proposition~\ref{prop:roi} gives a second consequence for total cascade work, and G-DLLL makes it algorithmic at the cost of exposing a fixed-cost barrier between selector quality and update cost. Appendix~\ref{app:variants} collects four probes of the design-space boundary. G-DLLL-RT, G-DLLL-CA, and the $f_{\alpha,\beta}$ hybrid slide along the G-DLLL $(W, \delta_0)$ Pareto point without pushing it outward, as Remark~\ref{rem:anchored_fail} predicts; Schur-$K$ hits the weak-Schur plateau and stalls. That boundary points toward look-ahead methods rather than further tuning inside the same myopic design space.

% ===================================================================
\section{Conclusions}
\label{sec:conclusions}
% ===================================================================

Each non-degenerate Lov\'asz swap is a T-transform on the GSO log-norm profile (Theorem~\ref{thm:variance}), so every Schur-convex functional of the profile is nonincreasing, with strict decrease for strictly Schur-convex functionals. When the normalized profile lies in the positive simplex, normalized entropies move in the opposite direction. This gives a local, distribution-independent law for LLL dynamics. Variationally, the arithmetic progression of slope $-c_\delta$ is the unique minimum-variance profile among all profiles whose adjacent gaps meet or exceed the worst-case Lov\'asz threshold (Proposition~\ref{prop:gsa_minvar}), so the slope of the GSA envelope follows from the geometry of the constraints rather than from a separate heuristic ansatz. That envelope is a worst-case upper bound on the steepness of LLL output; empirical LLL slopes~\cite{gama2008predicting} are flatter because LLL halts at the first violation-free profile it meets, which usually sits strictly inside the gap polytope rather than on its boundary (Remark~\ref{rem:anchored_fail}). Dissipatively, variance drops at each non-degenerate step by an amount fixed by the gap closure of the corresponding T-transform (Proposition~\ref{prop:convergence}).

The same geometry organizes deep-insertion selectors into two Lov\'asz-compatible classes (Proposition~\ref{prop:canonical}): symmetric strictly Schur-convex scores, which contain the thermal scores $\phi_\alpha = \sum r_i^\alpha$, and weighted separable scores with position-ordered weights, which contain Pot-DeepLLL~\cite{fontein2014potlll} and the mixed thermal-weighted hybrids. Within the thermal family, SS-GG~\cite{bhattacherjee2025greedy} is the unique member whose $r$-gradient is independent of position and magnitude. Thermal-Adaptive moves $\alpha$ within that family from the initial profile, and in our benchmarks it reduces operation counts by 8 to 15\% relative to SS-GG on Uniform and Gaussian inputs at every tested dimension while recovering SS-GG on $q$-ary by construction (Section~\ref{sec:experiments}). Proposition~\ref{prop:roi} motivates G-DLLL through the ROI efficiency $\eta(k,j) = \Delta V(k,j)/(k-j)$, and the resulting selector reduces equivalent-swap counts~$W$ by 1 to 27\% across the tested families, but not wall-clock time. Appendix~\ref{app:variants} gathers nearby variants; none dominates the main selectors, and together they sharpen the picture of Remark~\ref{rem:anchored_fail}: any score that pulls the profile toward the GSA target $\mathbf p^*$ can reject swaps that the Lov\'asz condition itself accepts, because LLL operates inside the gap polytope while $\mathbf p^*$ sits on its worst-case boundary. The variants slide along the existing Pareto frontier rather than crossing it.

Three directions stand out. The first extends the picture from variance to the entropy side. Normalized Shannon and R\'enyi entropies are Schur-concave on the simplex of nonnegative coordinates, so the T-transform reading transfers verbatim only when the profile is all-positive (Remark~\ref{rem:entropy}). Pre-LLL profiles routinely contain negative log-norms, and a Lov\'asz swap can flip the sign of a coordinate. An adaptive selector built from a Schur-concave entropy needs a normalization that keeps the simplex condition along the swap trajectory; we leave that construction open. The second direction extends to BKZ, where overlapping windows break the simple global majorization picture; algorithmically, update cost rather than score design becomes the bottleneck for nonlocal selectors. A natural near-term test is block-level thermal adaptation inside BKZ. Early blocks tend to have flatter local profiles, so a per-block $\alpha$ calibrated from the local profile would test whether the present local guarantees extend block-wise. The third direction is to move beyond myopic acceptance. A look-ahead variant comparing short rollouts from the top few candidates could test whether the near-degeneracy exploited by Thermal-Adaptive has a useful multi-step analogue, and whether it can sidestep the obstruction of Remark~\ref{rem:anchored_fail}, where myopic anchoring on $\mathbf p^*$ rejects Lov\'asz-valid swaps. Non-separable symmetric Schur-convex scores present a related open question: they need strict rather than weak descent to avoid stalling, and a promising candidate would strictly decrease under every T-transform on the \emph{unsorted} profile.

% ===================================================================
%  ACKNOWLEDGMENTS
% ===================================================================
\section*{Acknowledgments}

This work received financial support from the Spanish Government through grant PID2023-148716OB-C33, funded by MICIU/AEI/10.13039/501100011033, under the project ``DIstributed Smart Communications with Verifiable EneRgy-optimal Yields (DISCOVERY)''. Additional support was provided by the Community of Madrid through the research and development activities program, reference TEC-2024/COM-504 (acronym RAMONES-CM), awarded via Order~5696/2024 of the General Directorate of Research and Technological Innovation.

% ===================================================================
%  DATA AVAILABILITY
% ===================================================================
\section*{Data Availability}
\label{sec:data_avail}

The reference implementation, benchmark suite, and plot/table generators
are released as the open-source repository \texttt{variationaLLL} at
\url{https://github.com/perlab-uc3m/variationaLLL}. The canonical implementation
is the C++/\texttt{fplll} code in \texttt{src/}, which underlies the
deep-insertion benchmarks; the Python scripts in \texttt{scripts/}
reproduce the figures and tables of this paper.

% ===================================================================
%  DECLARATION OF INTEREST
% ===================================================================
\section*{Declaration of Interest}

The authors report no conflicts of interest.

% ===================================================================
%  REFERENCES
% ===================================================================
\printbibliography

% ===================================================================
%  APPENDICES
% ===================================================================
\appendix

% ===================================================================
\section{Profile Universality: Cross-Dimension Correlation}
\label{app:universality}
% ===================================================================

This appendix documents the empirical support for Remark~\ref{rem:schur}. We generated $n$ random lattices per dimension with i.i.d.\ $\mathrm{Uniform}(\{-10,\ldots,10\})$ entries, computed the normalized post-LLL log-norm profile $\hat{\mathbf{p}} = \mathbf{p} / \|\mathbf{p}\|_1$ for each, and measured cross-dimension correlation as the Pearson correlation between mean normalized profiles at two dimensions after interpolating to a common grid. The sweep covers $d \le 140$, matching the range of the deep-insertion benchmarks of Section~\ref{sec:experiments}.

\begin{table}[H]
\centering
\caption{Cross-dimension Pearson correlations of mean normalized post-LLL profiles. Sample sizes: $n = 600$ at $d{=}30$, $n = 360$ at $d{=}40$, $n = 180$ at $d{=}60$, $n = 120$ at $d{=}80$, $n = 90$ at $d{=}100$, $n = 60$ at $d{=}120$, $n = 45$ at $d{=}140$.}
\label{tab:universality}
\small
\begin{tabular}{rrrrrrrr}
\toprule
 & $d{=}30$ & $d{=}40$ & $d{=}60$ & $d{=}80$ & $d{=}100$ & $d{=}120$ & $d{=}140$ \\
\midrule
$d{=}30$ & ---   & 0.994 & 0.972 & 0.949 & 0.936 & 0.925 & 0.917 \\
$d{=}40$ &       & ---   & 0.991 & 0.975 & 0.964 & 0.955 & 0.947 \\
$d{=}60$ &       &       & ---   & 0.995 & 0.988 & 0.982 & 0.977 \\
$d{=}80$ &       &       &       & ---   & 0.998 & 0.995 & 0.991 \\
$d{=}100$&       &       &       &       & ---   & 0.999 & 0.997 \\
$d{=}120$&       &       &       &       &       & ---   & 0.999 \\
\bottomrule
\end{tabular}
\end{table}

For all pairs with $d \geq 30$, the minimum correlation is $0.917$ (at $d{=}30$ vs.\ $d{=}140$); for pairs with $d \geq 60$ the minimum rises to $0.977$. The decay with dimension gap is gradual: adjacent rows in the table all exceed $0.99$, and even the largest gap tested ($d{=}30$ to $d{=}140$) stays above $0.91$. The same universality holds across uniform, Gaussian, sparse, and $q$-ary inputs (Pearson $> 0.965$ at $d \geq 30$) and across $\delta \in \{0.55, 0.75, 0.90, 0.99\}$ (minimum cross-$\delta$ correlation $0.974$ at $d = 30$).

% ===================================================================
\section{Auxiliary Variants and Counterexamples}
\label{app:variants}
% ===================================================================

This appendix collects four probes of the Schur-convex design space. None dominates SS-GG, Thermal-Adaptive, or G-DLLL. Each variant targets one geometric feature of Proposition~\ref{prop:canonical} or Remark~\ref{rem:anchored_fail}, and each confirms that staying inside the Schur-convex family with myopic acceptance moves the operating point along the Pareto frontier rather than pushing it outward.

All four probes share the same protocol: $\delta = 0.99$, $n = 8$ lattices per $(family, d)$ cell, $d \in \{40, 80\}$, seed 42. The reduced sample size and dimension range reflect the exploratory role of these variants; they probe qualitative behavior, not whether the small differences scale to higher dimensions. Representative numbers appear in Table~\ref{tab:variants}.

\paragraph{C.1. Schur-$K$: a non-separable symmetric counterexample.}
Proposition~\ref{prop:canonical} leaves open the non-separable symmetric Schur-convex corner. The cleanest probe is the top-$K$ partial sum of the sorted log-profile, $\phi(\mathbf p) = \sum_{i=1}^{K} p_{[i]}$ with $K = \lceil d/2 \rceil$. The two-position update of Remark~\ref{rem:ssgg_inc} still applies, so the selector costs no more per candidate than SS-GG. Empirically it fails cleanly. On Gaussian $d{=}80$, Schur-$K$ stops after $26$ insertions instead of SS-GG's $173$, but $\delta_0$ worsens by $1.6 \cdot 10^{-3}$ and profile variance more than doubles. On $q$-ary $d{=}40$, $\delta_0$ worsens by $0.076$ and final variance jumps from $0.13$ to $5.17$. The issue is not speed; it is stalling. Partial sums of a sorted tuple are weakly, but not strictly, Schur-convex under the moves seen by the greedy loop. A T-transform that redistributes mass entirely inside the top-$K$ block, or entirely inside the bottom $d{-}K$ block, leaves $\phi$ unchanged. Such Lov\'asz-admissible deep insertions score zero descent and get rejected by the strictly positive acceptance rule, even though LLL would accept them. The word \emph{strictly} in Proposition~\ref{prop:canonical}(i) is essential.

\paragraph{C.2. G-DLLL-RT and G-DLLL-CA: residual shortlist and cost-aware selection.}
G-DLLL-RT restricts the G-DLLL shortlist to candidates with positive GSA residual $z_i = p_i - p_i^* > 0$, targeting the slab where the deviation from the asymptotic slope is largest. G-DLLL-CA divides $\Delta V(k,j)$ by a fixed per-candidate cost model rather than by geometric depth $k-j$, making the ROI efficiency explicitly cost-aware. On Gaussian inputs, both variants match G-DLLL's $\delta_0$ and equivalent-swap count $W$ at lower operation count $N$; for Gaussian $d{=}80$, $N$ drops from $646$ to $351$ at the same $W = 2230$. On $q$-ary $d{=}80$, G-DLLL-RT saves about $5\%$ of $N$ relative to G-DLLL-CA and shifts $W$ upward by about $5\%$, with $\delta_0$ staying within $3 \cdot 10^{-4}$. Neither variant crosses the Pareto frontier spanned by G-DLLL and Thermal-Adaptive. Anchoring the shortlist on $\mathbf p^*$ or on a fixed cost model breaks symmetry in exactly the way Remark~\ref{rem:anchored_fail} predicts, so the operating point slides along the frontier instead of pushing it outward.

\paragraph{C.3. The $f_{\alpha,\beta}$ hybrid.}
The weighted separable class of Proposition~\ref{prop:canonical} admits the two-parameter family
\begin{equation*}
f_{\alpha,\beta}(\mathbf{r}) = \sum_i (d - i + 1)^\beta\, r_i^\alpha, \qquad \alpha > 0,\ \beta \geq 0,
\end{equation*}
which interpolates between the thermal family at $\beta = 0$ and a Pot-DeepLLL-like position weighting at $\alpha = 1$. At $\beta = 2$, using Thermal-Adaptive's $\alpha$, equivalent-swap counts drop by $4$ to $8\%$ on Gaussian inputs but rise by $1$ to $7\%$ on $q$-ary inputs. So no fixed $\beta$ improves strictly on the $\beta = 0$ endpoint. A profile-adaptive $\beta$ schedule would still lie inside the weighted separable class, but would need a second calibration statistic beyond the $\mathrm{CV}$ used for $\alpha$, and the gain is bounded by that same $4$ to $8\%$ window.

\paragraph{C.4. Thermal-Sched: re-calibrating $\alpha$ during reduction.}
Thermal-Adaptive fixes $\alpha = \alpha_0$ once, from the initial $\mathrm{CV}$. A softer rule re-estimates $\alpha$ every $P$ accepted insertions from the current profile. On Goldstein-Mayer at $d = 80$, this improves output quality by about $8 \cdot 10^{-4}$ in $\delta_0$ and $20\%$ in final variance relative to SS-GG, but at roughly twice the equivalent-swap cost. The trade-off is almost flat in the period $P$: for $P \in \{20,40,80,160\}$, $W$ stays in $[14.7\mathrm{k},\,17.0\mathrm{k}]$ against SS-GG's $8.0\mathrm{k}$, while $\delta_0$ stays at $1.0128 \pm 3 \cdot 10^{-5}$ against $1.0137$. So the extra work comes from moving $\alpha$ as $\mathrm{CV}$ drops during reduction, not from a bad rescheduling choice. Calibrating once remains the practically useful $(W,\delta_0)$ point.

\begin{table}[H]
\centering
\caption{Auxiliary variants on representative cells ($\delta = 0.99$, 8 lattices per cell, mean values). The first three columns reproduce the main-text baselines; the last three give the auxiliary variants. $W$ is equivalent-swap count, $\delta_0$ the output Hermite factor, and $V$ the final profile variance.}
\label{tab:variants}
\small
\setlength{\tabcolsep}{4pt}
\begin{tabular}{llrrrrrr}
\toprule
Family & $d$ & SS-GG & Th.-Ad.\ & G-DLLL & G-DLLL-RT & G-DLLL-CA & Schur-$K$ \\
\midrule
\multicolumn{8}{l}{\emph{Equivalent swaps $W$}} \\
Gaussian & 40 & 714  & 667  & 636  & 636  & 636  & 237 \\
Gaussian & 80 & 2358 & 2233 & 2212 & 2231 & 2231 & 462 \\
$q$-ary   & 40 & 3264 & 3264 & 3068 & 3151 & 3063 & 1001 \\
$q$-ary   & 80 & 20383 & 20383 & 15025 & 15763 & 14889 & 2202 \\
\midrule
\multicolumn{8}{l}{\emph{Output $\delta_0$}} \\
Gaussian & 40 & 1.00620 & 1.00620 & 1.00620 & 1.00620 & 1.00620 & 1.01236 \\
Gaussian & 80 & 1.00401 & 1.00401 & 1.00403 & 1.00403 & 1.00403 & 1.00564 \\
$q$-ary   & 40 & 1.01405 & 1.01405 & 1.01388 & 1.01352 & 1.01416 & 1.09031 \\
$q$-ary   & 80 & 1.01410 & 1.01410 & 1.01532 & 1.01504 & 1.01548 & 1.04418 \\
\midrule
\multicolumn{8}{l}{\emph{Final profile variance $V$}} \\
Gaussian & 80 & 0.119 & 0.119 & 0.125 & 0.128 & 0.128 & 0.266 \\
$q$-ary   & 80 & 0.494 & 0.494 & 0.597 & 0.556 & 0.595 & 7.70 \\
\bottomrule
\end{tabular}
\end{table}

Taken as a whole, Table~\ref{tab:variants} shows the auxiliary variants either clustering around the G-DLLL point (G-DLLL-RT, G-DLLL-CA) or losing clearly to SS-GG in output quality (Schur-$K$). Schur-$K$'s drop in $W$ on $q$-ary inputs comes with a large loss in $\delta_0$ and variance, confirming the plateau argument of paragraph C.1. The remaining variants move along the frontier without crossing it, matching the slab picture of Remark~\ref{rem:anchored_fail}.

\end{document}